
\documentclass[11pt, letterpaper, onecolumn]{IEEEtran}

\IEEEoverridecommandlockouts
\usepackage[margin=0.70 in]{geometry}
\usepackage{amsmath,amsfonts,amsthm,enumerate}
\usepackage{amssymb}
\usepackage{graphicx}
\usepackage{cite,bm}
\usepackage{color,epsfig}
\usepackage{authblk}

\theoremstyle{plain}
\newtheorem{mythm}{Theorem}
\newtheorem{mylemma}{Lemma}

\theoremstyle{definition}
\newtheorem{mydef}{Definition}
\newtheorem{myexample}{Example}

\theoremstyle{remark}
\newtheorem{myrm}{Remark}

\usepackage[ colorlinks = true,
             linkcolor = blue,
             urlcolor  = blue,
             citecolor = red,
             anchorcolor = green,
]{hyperref}

\allowdisplaybreaks[1]

\begin{document}
\title{A Case Where Interference Does Not Affect The Channel Dispersion} 

\author{Sy-Quoc Le,$\,\,\,$ Vincent Y.~F.\ Tan,$\,\,\,$ Mehul Motani \thanks{The authors are   with the Department of Electrical and Computer Engineering (ECE), National University of Singapore (NUS). V.~Y.~F.\ Tan is also with the Department of Mathematics, NUS.  The authors' emails are \url{le.sy.quoc@nus.edu.sg},  \url{vtan@nus.edu.sg} and  \url{motani@nus.edu.sg}. } } 
 
\maketitle
\begin{abstract}

In 1975, Carleial presented a special case of an interference channel in which the interference does not reduce   the capacity of the constituent point-to-point Gaussian channels. In this work, we show that if the inequalities in the conditions that Carleial stated are strict, the dispersions are similarly unaffected. More precisely, in this work, we characterize the second-order coding rates of the Gaussian interference channel in the strictly very strong interference regime. In other words, we characterize the speed of convergence of rates of optimal block codes towards a boundary point of the (rectangular) capacity region. These second-order rates are expressed in terms of the   average probability of error and variances of some  modified information densities which coincide with the dispersion of the (single-user) Gaussian channel. We thus conclude that the dispersions are unaffected by interference in  this channel model. 
\end{abstract}

\section{Introduction}
Recently,  the study  of second-order coding rates  for fixed error probabilities has become an increasingly prominent research topic in network information theory because the analysis provides key insights into the  (delay-constrained) performance of the communication systems in the finite blocklength regime~\cite{Polyanskiy2010}. Strassen \cite{Strassen62}, Hayashi \cite{Hayashi2009}, and Polyanskiy, Poor and Verd\'u~\cite{Polyanskiy2010} characterized the second-order coding rate of the discrete memoryless (DM) point-to-point channel and the additive white Gaussian noise (AWGN) point-to-point channel. The result can be summarized as follows. If $M^*(n,\epsilon, \mathsf{SNR})$ denotes the maximum number of codewords that can be transmitted over $n$ uses of a discrete-time AWGN channel with signal-to-noise ratio $\mathsf{SNR}$ and average error probability no larger than $\epsilon \in (0,1)$, then,  it was shown by \cite{Polyanskiy2010} and \cite{TanTom13} that 
\begin{equation}
\log M^*(n,\epsilon,\mathsf{SNR}) = n\mathsf{C}(\mathsf{SNR}) + \sqrt{n \mathsf{V} (\mathsf{SNR})}\Phi^{-1}(\epsilon) + \frac{1}{2} \log n + O(1) \label{eqn:gauss1}  
\end{equation}
where $\Phi(\cdot)$ is the cumulative distribution function of the standard Gaussian, and the {\em Gaussian capacity} $\mathsf{C}(\mathsf{SNR})$  and {\em Gaussian dispersion} functions $\mathsf{V}(\mathsf{SNR})$ are respectively defined as 
\begin{align}
\mathsf{C}(\mathsf{SNR}) &\triangleq \frac{1}{2}\log (1+\mathsf{SNR}) \mbox{ nats per channel use} ,  \label{eqn:g_cap}
\end{align}
and
\begin{align}
\mathsf{V}(\mathsf{SNR}) &\triangleq \frac{\mathsf{SNR}(\mathsf{SNR}+2)}{2(\mathsf{SNR}+1)^2} \mbox{ nats$^2$ per channel use}. \label{eqn:disp_def}
\end{align}

The sum of the first two terms of equation (1), namely $n\mathsf{C}(\mathsf{SNR}) + \sqrt{n \mathsf{V}(\mathsf{SNR})} \Phi^{-1}(\epsilon)$, is called the {\em normal approximation} to the logarithm of the size of the optimal codebooks $\log M^*(n,\epsilon,\mathsf{SNR})$. Since it has been shown that the normal approximation is a good proxy to the finite blocklength fundamental limits \cite{Polyanskiy2010} at moderate blocklengths, the result can be interpreted as follows: If a system designer desires to use a Gaussian communication channel up to $n$ times with a tolerable average error probability not exceeding $\epsilon$, the maximum number of nats of information he can communicate is roughly $n \mathsf{C}(\mathsf{SNR}) + \sqrt{n \mathsf{V}(\mathsf{SNR})} \Phi^{-1}(\epsilon)$. Thus, for $\epsilon<0.5$, the backoff from the Shannon limit (Gaussian capacity) is  $\sqrt{ {\mathsf{V}(\mathsf{SNR})}/{n}} \, \Phi^{-1}(1-\epsilon)$ (a positive quantity). The constraint on the blocklength is motivated by real-world, delay-constrained applications such as real-time multimedia streaming. In such applications, the communication data is usually divided into a stream of packets, which have to arrive at their desired destinations within a certain acceptable, and usually short, delay.

The quantities $\mathsf{C}(\mathsf{SNR})$ and $\mathsf{V}(\mathsf{SNR})$ are respectively the expectation and the conditional variance of an appropriately defined information density random variable. These are information-theoretic quantities that characterize the information transmission capability of the channel. In fact, $\mathsf{V}(\mathsf{SNR})$, coined the ``dispersion'' by Polyanskiy-Poor-Verd\'u~\cite{Polyanskiy2010}, is a channel-dependent quantity that characterizes the speed at which the rates of capacity-achieving codes converge to the Shannon limit. The {\em second-order coding rate}, a term coined by Hayashi~\cite{Hayashi08,Hayashi2009}, is a different, but related, object. It is  the coefficient of the $\sqrt{n}$ term in \eqref{eqn:gauss1}, namely $\sqrt{\mathsf{V}(\mathsf{SNR})}\Phi^{-1}(\epsilon)$. More precisely, the $(\kappa,\epsilon)$-second-order coding rate $L^*(\kappa,\epsilon)\in\mathbb{R}$ is the maximum $L$ for which there exists a sequence of length-$n$ block codes of sizes $M_n$ and error probabilities asymptotically not exceeding $\epsilon$ such that

\begin{equation}
\log M_n \ge n\kappa +\sqrt{n} L + o\big(\sqrt{n}\big). \label{EquationL}
\end{equation}

If $\kappa<\mathsf{C}(\mathsf{SNR})$, then it can be seen by the direct part of the coding theorem for the AWGN channel that $L^*(\kappa,\epsilon)=\infty$. If the {\em strong converse} holds (and for the AWGN channel it does~\cite{Yoshihara}), then for all $\kappa> \mathsf{C}(\mathsf{SNR})$, the $(\kappa,\epsilon)$-second-order coding rate $L^*(\kappa,\epsilon)=-\infty$. Hence, the only non-trivial case is the phase-transition point $\kappa= \mathsf{C}(\mathsf{SNR})$. Hayashi's result is that~\cite{Hayashi2009}

\begin{equation}
L^*(\mathsf{C}(\mathsf{SNR}),\epsilon)= \sqrt{\mathsf{V}(\mathsf{SNR})}\Phi^{-1}(\epsilon),
\end{equation}
which implies the set of real numbers  $L$ satisfying
\begin{equation}
L \leq  \sqrt{\mathsf{V}(\mathsf{SNR})}\Phi^{-1}(\epsilon),
\end{equation}
is second-order achievable, i.e., there exists a sequence of length-$n$ block codes, with  average error probabilities not exceeding $\epsilon$ asymptotically, and  fixed sizes $M_n$, such that (\ref{EquationL}) holds.

Note that second-order coding rates can be negative depending on $\epsilon$. 
Since the problem we are solving in this paper is a multi-terminal one, we focus on characterization of the {\em set of} achievable second-order coding rates $(L_1,L_2)$, which is a subset of the real plane.

\begin{figure}[t]
\centering
\setlength{\unitlength}{.05cm}
\begin{picture}(125, 82)
\put(0, 5){\vector(1, 0){120}}
\put(5, 0){\vector(0, 1){80}}

\put(120, 7){\mbox{$R_1$}}
\put(7, 82){\mbox{$R_2$}}

\put(-1, -3){\mbox{$0$}}

\put(72, -3){\mbox{$I_{11}$}}
\put(6, 68){\mbox{$I_{21}$}}

\put(80,50){\circle*{4}}
\put(83, 50){(i)}
\put(80,65){\circle*{4}}
\put(82, 68){(ii)}
\put(65,65){\circle*{4}}
\put(56,69){(iii)}

\linethickness{.02in}
\put(80, 5){\line(0, 1){60}}
\put(5, 65){\line(1, 0){75}}
\end{picture}
\caption{Illustration of the capacity region of the Gaussian IC with very strong interference \cite{Carleial75}. The signal-to-noise ratios $S_j = h_{jj}^2 P_j$  and $I_{11}=\mathsf{C}(S_1)$ and $I_{21}=\mathsf{C}(S_2)$. }
\label{fig:wz}
\end{figure}
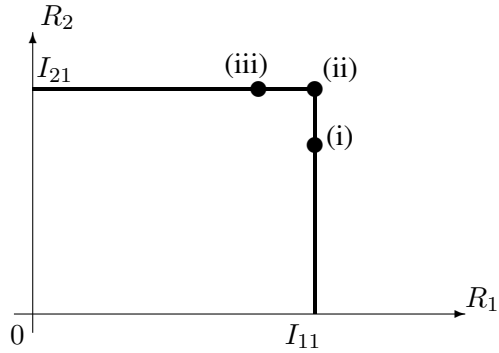

\subsection{Prior Work}
Following the pioneering works in \cite{Strassen62,Polyanskiy2010,Hayashi2009}, there have been many follow-up works for various point-to-point models~\cite{Hoydis12, Yang13, Ingber2010,TT2013}, for source coding \cite{KV12, IK11, KontoVerdu12, Kosut13}, for joint source-channel coding \cite{WangAllerton11, KostinaIT13}, and for coding with side-information \cite{WKT13}. However, it is not trivial to  generalize these results from  the  single- to  the multi-user setting. Thus far, there have been only a few second-order works  for multi-user settings. Hence, the understanding  is far from being complete. Initial efforts focused on {\em global achievable dispersions}\cite{Haim12} for the DM multiple-access channel (MAC) \cite{TanKosut12, Molavian12, Huang12, ML13}, for the DM asymmetric broadcast channel \cite{TanKosut12}, and for the DM interference channel (IC) \cite{LTM13}. However, as pointed out by Haim et al.~\cite{Haim12}, global dispersion analysis has certain drawbacks such as the failure to precisely capture the nature of convergence to the boundary of the capacity region, the inability  in characterizing the deviation from a specific point on the boundary and the difficulty in obtaining conclusive second-order results. To overcome these weaknesses, Haim et al.~\cite{Haim12} proposed {\em local  dispersion analysis}. Tan-Kosut \cite{TanKosut12} and Nomura-Han \cite{Nomura13} characterized the second-order optimal rate region (the set of achievable second-order coding rates for fixed error probability $\epsilon$ and a fixed point on the optimal rate region) for  distributed source coding, i.e., the Slepian-Wolf problem~\cite{sw73}. While it is possible to obtain tight second-order converse bounds for distributed source coding, it is challenging to do similarly for channel coding problems such as the DM-MAC. This is due in part to the union over independent input distributions. Scarlett-Tan \cite{ST13} recently obtained the second-order capacity region for the Gaussian MAC with degraded message sets. The degradedness of the message sets makes it possible to avoid certain difficulties to get a tight converse by appealing to the reductions similar to the method of types. The local second-order capacity region for the Gaussian MAC with non-degraded message sets is an open problem.

\subsection{Our Main Contribution}
In this paper, we study the local dispersions of the Gaussian IC in the strictly very strong  interference regime. Carleial showed that the capacity region of the very strong Gaussian IC (which includes the strictly very strong Gaussian IC) is a rectangle \cite{Carleial75}, as shown in Figure~\ref{fig:wz}. We characterize the so-called {\em second-order capacity region}, which we   briefly explain here. We   fix a point $(\kappa_1,\kappa_2)$ lying on the boundary of the   capacity region. We also fix an admissible error probability  $\epsilon\in (0,1)$. We then characterize the set of pairs $(L_1, L_2)$ for which there exists a sequence of blocklength-$n$ codes with $M_{jn}$   codewords,  and average error probabilities not exceeding $\epsilon$ asymptotically,  such that 
\begin{equation}
\log M_{jn}\ge n\kappa_j + \sqrt{n } L_j + o(\sqrt{n}), \label{eqn:num_cws}
\end{equation}
for $j =1,2$. The converse is proved using a generalized version of the Verd\'u-Han lemma~\cite{VH94}, which involves only two error events. The direct part is proved using a generalized version of  Feinstein's lemma~\cite{Feinstein}, which involves four error events. The condition of being in the strictly very strong interference regime reduces the number of error events involved in the direct part, thus allowing the converse to match the direct part.  Our key contribution is the  determination of  the set of second-order rate pairs $(L_1,L_2)$, which characterize the rate of convergence of optimal (first-order) rates  to a particular point $(\kappa_1,\kappa_2)$ lying on the boundary of the capacity region. One of the interesting observations is that,  if $(\kappa_1,\kappa_2)$ is the corner point of the rectangular capacity region (case (ii) in Figure~\ref{fig:wz}), then the set of all such $(L_1, L_2) \in \mathbb{R}^2$ is given by 
\begin{equation}
\Phi\left( - \frac{L_1}{\sqrt{V_1}} \right)\Phi\left( - \frac{L_2}{\sqrt{V_2}} \right)\ge 1-\epsilon, \label{eqn:main_res_intro}
\end{equation}
where $V_j \triangleq \mathsf{V}(\mathsf{SNR}_j)$ is the effective  Gaussian dispersion of the channel from the $j^{\mathrm{th}}$ transmitter to the $j^{\mathrm{th}}$ receiver, i.e., $V_j$ is equal to~\eqref{eqn:disp_def}  evaluated at signal-to-noise ratio $\mathsf{SNR}_j$. An illustration of the $(L_1, L_2)$ region is provided in Figure~\ref{fig:ex}. We see  from \eqref{eqn:main_res_intro} that the two channels appear to operate independently of each other. Indeed $\Phi\left( - {L_j}/ {\sqrt{V_j}} \right)$ is asymptotically the probability of correct detection of the $j^{\mathrm{th}}$-channel where the number of codewords for the $j^{\mathrm{th}}$ codebook is given by $M_{jn}$. Intuitively, the inequality in \eqref{eqn:main_res_intro} says that the system does not make an error  if and only if both channels do not err.  Just as Carleial \cite{Carleial75} showed that in the very strong interference regime   the capacities of the constituent channel are not reduced, in the {\em strictly} very strong interference regime, our main result  shows that {\em the dispersions $V_1$ and $V_2$ remain unchanged} and there is no cross-correlation between the two channels in the sense of \eqref{eqn:main_res_intro}. 

We emphasize that apart from Scarlett-Tan's work~\cite{ST13}, this is the only work that completely characterizes the local dispersions for a channel-type network information theory problem. Furthermore, this is the first work which characterizes the local dispersions for a channel-type network information theory problem, where input distributions are of the product form. 

This paper is accepted for and is to be presented in part at International Symposium on Information Theory 2014.

\subsection{Paper Organization}
This paper is organized as follows. The system model is introduced and the problem is formulated in Section \ref{SystemModel}. Next, the main result of the paper is stated and discussed in Section \ref{MainResult}. Future works are then discussed in Section \ref{FutureWork}. All proofs are deferred to the    appendices.

\section{System model and problem formulation} \label{SystemModel}
The two-user Gaussian interference channel (IC) is defined by the following input-output relationships
\begin{align}
Y_{1i} &= h_{11}X_{1i} + h_{21}X_{2i} + Z_{1i}, \\
Y_{2i} &= h_{12}X_{1i} + h_{22}X_{2i} + Z_{2i}, 
\end{align}
where $X_{ji}$ denotes the signal sent by transmitter $j$ ($\mathrm{Tx}_j$ in short), $Y_{j i}$ denotes the output at receiver $j$ ($\mathrm{Rx}_j$ in short), for $j=1,2$, at time $i$, for $i \in \{1,2,...,n\}$, and $\{Z_{j i}\}_{i=1} ^n$  are independent (across time and between users at a fixed time),\footnote{The assumption of independence between the channel noises $Z_{1i}$ and $Z_{2i}$ was not made in Carleial's work~\cite{Carleial75}  (i.e., $Z_{1i}$ and $Z_{2i}$  may be correlated) but we need this assumption for the analyses in the current work. Indeed, this is a common assumption in Gaussian ICs in the literature \cite{HK81}. It is well known that the capacity region of any general  IC depends only on the marginals $W_1$ and $W_2$ \cite[Chapter~6]{elgamal} but it is, in general, not true that   the $(\kappa_1,\kappa_2,\epsilon)$-second-order capacity region (per Definition \ref{def:second_order}) has the same property.} 
additive white  Gaussian noise processes with zero means and unit variances. Denote the input alphabets as $\mathcal{X}_j^n$, and the output alphabets as $\mathcal{Y}_j^n$. Denote the transitional probability 
$ P_{Y_1^n Y_2^n|X_1^n X_2^n} (y_1^n y_2^n|x_1^n x_2^n)$ as  $W^n(y_1^n y_2^n|x_1^n x_2^n)$ for conciseness.
Denote the $Y_1$- and $Y_2$-marginals  of $W$ as $W_1$ and $ W_2$ respectively.                             
The forward channel gains $\{h_{11},h_{21},h_{12},h_{22}\}$  are assumed to be positive  constants and known at all terminals.
Transmitter $\mathrm{Tx}_j$, for $j=1,2$, wishes to communicate a message $S_{j} \in \{ 1,2,...,M_{jn}\}$ to receiver $\mathrm{Rx}_j$.  It is assumed that  the messages $S_1$ and $S_2$ are independent, and uniformly distributed on their respective message sets $\mathcal{W}_j \triangleq \{ 1,2,...,M_{jn}\}$, for  $j = 1,2$.   We use nats as the units of information.

Define  the {\em feasible set}  of channel inputs
\begin{equation}
\mathcal{F}_{jn} \triangleq \left\{x_j^n \in \mathcal{X}_j^n \,\bigg| \, \sum _{k=1} ^{n}  x_{jk}^2\leq nP_j \right\}
\end{equation}
for positive numbers $P_j, j=1,2$. $P_1$ and $P_2$ are the upper bounds on the average powers of the codewords. An {\em $(M_{1n},M_{2n},n, \epsilon_n,P_1,P_2)$-code for the Gaussian IC}  consists  of two encoding functions
$f _{jn} : \mathcal{W}_{j} \rightarrow \mathcal{F}_{jn} $
and two decoding functions
$g_{jn}:  \mathcal{Y}_{j}^n  \rightarrow \hat{\mathcal{W}}_{j} \text{ for } j=1,2,$
where the {\em average probability of error} is defined as
\begin{equation}
\epsilon_n \triangleq \Pr \left(\hat{S}_1 \not= S_1 \text{ or } \hat{S}_2 \not= S_2 \right). \label{eqn:error_prob}
\end{equation}



In the spirit of the works on second-order asymptotics \cite{Hayashi08, Hayashi2009,Nomura13,ST13,TanKosut12}, we define the   second-order capacity region as follows. 
\begin{mydef} \label{def:second_order}
Fix any two non-negative numbers $\kappa_1$ and $\kappa_2$. A   pair $(L_1,L_2)$ is said to be \textit{$(\kappa_1, \kappa_2,\epsilon)$-achievable} \footnote{We note that it is more precise to define a pair being $(P_1,P_2,\kappa_1, \kappa_2, \epsilon)$-achievable. However, we omit the dependence on $(P_1,P_2)$ as $(P_1,P_2)$ are fixed throughout the paper.} if there exists a sequence of  $(M_{1n}, M_{2n},n,\epsilon_n,P_1,P_2)$-codes such that 
\begin{equation}
\limsup_{n \to \infty} \epsilon_n \leq \epsilon,
\end{equation}
and
\begin{equation}
\liminf_{n \to \infty} \frac {1}{\sqrt{n}} (\log M_{jn} - n \kappa_j) \geq L_j  \label{eqn:second_order}
\end{equation}
for $j = 1,2$. The \textit{$(\kappa_1, \kappa_2,\epsilon)$-second-order capacity region} of the IC $\mathcal{L}(\kappa_1,\kappa_2,\epsilon)\subset\mathbb{R}^2$ is defined as the closure of the set of all $(\kappa_1, \kappa_2,\epsilon)$-achievable rate pairs $(L_1,L_2)$.  
\end{mydef}


\begin{mydef} \label{Def2}
The IC is said to have a \textit{very strong interference} if 
\begin{align} 
h_{22}^2 \leq \frac{h_{21}^2}{1+h_{11}^2 P_1}\,\,  \text{ and } \,\, h_{11}^2 \leq \frac{h_{12}^2}{1+h_{22}^2 P_2}. \label{E1}
\end{align}
The IC is said to have a \textit{strictly very strong interference} if both inequalities in (\ref{E1}) are strict.
\end{mydef}

\begin{myexample} \label{eg:first}
Consider a Gaussian IC, where $P_1 = P_2 =1$, $h_{11}= h_{22} =1$, $h_{21} = 3$, and $h_{12}=4$. This is an example of a Gaussian IC in the strictly  very strong interference regime. Clearly, there are uncountably many such examples as long as the interference link gains $h_{21}$ and $h_{12}$ are sufficiently large compared to the direct link gains $h_{11}$ and $h_{22}$  and the admissible powers $P_1$ and~$P_2$.
\end{myexample}

\begin{mydef}
Recall the definition of the Gaussian capacity function $\mathsf{C}(\cdot )$ in \eqref{eqn:g_cap}. Define   the following first-order quantities
\begin{align}
I_{11} &\triangleq \mathsf{C} (h_{11}^2 P_1 ),\qquad I_{12} \triangleq \mathsf{C} (h_{11}^2 P_1 + h_{21}^2 P_2 ), \\
I_{21} &\triangleq \mathsf{C} (h_{22}^2 P_2 ),\qquad   I_{22}  \triangleq \mathsf{C} (h_{22}^2 P_2 + h_{12}^2 P_1 ),  \\
\mathbf{I}_{\mathrm{c}} &\triangleq \left[I_{11}   \, \, I_{21}  \right]^T, \qquad  \mathbf{I}_{\mathrm{d}} \triangleq \left[ I_{11} \,\, I_{21} \,\, I_{12} \,\, I_{22} \right]^T. 
\end{align}
\end{mydef}
The vectors $\mathbf{I}_\mathrm{c}$ and $\mathbf{I}_{\mathrm{d}}$ characterize the first-order regions that are obtained naturally from  converse and direct bounds respectively. The non-asymptotic bounds that we evaluate also yield these first-order vectors.

Carleial \cite{Carleial75} proved  that the capacity region $\mathcal{C}$ of the Gaussian IC in the very strong interference regime is given by
\begin{align}
\mathcal{C} = \left\{(R_1,R_2) \in \mathbb{R}_+^2 \, | \  R_1 \leq I_{11},\,\, R_2 \leq I_{21} \right\}. \label{eqn:carl}
\end{align}

A certain set of information densities plays an important role for the  IC \cite{HK81,CMG,LTM13}. However, in dealing with channels with cost constraints, modified information densities \cite{Hayashi2009,ML13} offer certain advantages in the evaluation of non-asymptotic bounds as $n \to \infty$. 

\begin{mydef} \label{Di}
Fix a joint distribution
\begin{align}
P_{Y_1^n Y_2^n X_1^n X_2^n }(y_1^n y_2^n x_1^n x_2^n  ) = P_{X_1^n}(x_1^n)P_{X_2^n}(x_2^n) W_{1}^n (y_{1}^n|x_{1}^n x_{2}^n) W_{2}^n ( y_{2}^n|x_{1}^n x_{2}^n). \label{Ed}
\end{align}
Given two auxiliary   (conditional)  output distributions   $Q_{Y_1^n|X_2^n}$ and $Q_{Y_1^n}$  \footnote{In the following, we will refer to $Q_{Y_1^n|X_2^n}$ and $Q_{Y_1^n}$ collectively as output distributions, dropping the qualifier \textit{conditional}, for the sake of brevity.}, define the {\em modified information densities}
\begin{align}
\tilde{i}^n_{11}( X_1^n X_2^n Y_1^n) &\triangleq \log \frac{ W_{1}^n (Y_{1}^n|X_{1}^n X_{2}^n)}{Q_{Y_1^n|X_2^n}(Y_1^n|X_2^n)}, \label{eqn:i11_tilde} \\ 
\tilde{i}^n_{12}(X_1^n X_2^n Y_1^n) &\triangleq \log \frac{ W_{1}^n (Y_{1}^n|X_{1}^n X_{2}^n)}{Q_{Y_1^n}(Y_1^n)}. \label{eqn:i12_tilde}
\end{align}
We will often use the shorthands $\tilde{i}^n_{11}$ and $\tilde{i}^n_{12}$. Furthermore, the dependencies of $\tilde{i}^n_{11}$ and $\tilde{i}^n_{12}$ on the channel $W_1^n$ and the     output distributions   $Q_{Y_1^n|X_2^n}$ and  $Q_{Y_1^n}$ will be suppressed for the sake of brevity.
  
Similarly, given two auxiliary     output distributions   $Q_{Y_2^n|X_1^n}$ and $Q_{Y_2^n}$, we define $\tilde{i}^n_{21}( X_1^n X_2^n Y_2^n)$ and $\tilde{i}^n_{22}(X_1^n X_2^n Y_2^n)$. 
In addition, we define
\begin{align}
\tilde{\mathbf{i}} ^n_{\mathrm{c}}(X_1^n X_2^n Y_1^n Y_2^n) &\triangleq [\tilde{i}^n_{11}  \quad \tilde{i}^n_{21}]^T \label{eqn:ic_vec} \\
\tilde{\mathbf{i}} ^n_{\mathrm{d}} (X_1^n X_2^n Y_1^n Y_2^n) &\triangleq [\tilde{i}^n_{11} \quad \tilde{i}^n_{21} \quad \tilde{i}^n_{12} \quad \tilde{i}^n_{22} ]^T.  \label{eqn:id_vec}
\end{align}
\end{mydef}

\begin{mydef}
Recall the definition of the Gaussian dispersion function $\mathsf{V}(\cdot )$ in \eqref{eqn:disp_def}.  Define the second-order quantities
\begin{align}
V_1 \triangleq \mathsf{V}(h_{11}^2 P_1),\quad\mbox{and}\quad
V_2 \triangleq \mathsf{V}(h_{22}^2 P_2) .   
\end{align}
\end{mydef}
Note that $h_{jj}^2 P_j$ is the signal-to-noise ratio of the direct  channel from  $\mathrm{Tx}_j$ to  $\mathrm{Rx}_j$ and $\mathsf{V}(h_{jj}^2 P_j)$ is the corresponding dispersion. Also, the expectation and the conditional covariance of  the random vector $\tilde{\mathbf{i}}_{\mathrm{c}}(X_1  X_2  Y_1  Y_2)$ are $\mathbf{I}_{\mathrm{c}}$ and $\mathrm{diag}([V_1, V_2])$ respectively if $(X_1 , X_2 )\sim \mathcal{N}(\mathbf{0}, \mathrm{diag}( [ P_1, P_2]) )$, $Q_{Y_1|X_2}(\cdot|x_2) = \mathcal{N}(h_{21}x_2, h_{11}^2 P_1 +   1)$ and $Q_{Y_2|X_1} (\cdot|x_1) = \mathcal{N}(h_{12}x_1, h_{22}^2 P_2 +   1)$.

%

The following  is the cumulative distribution function of a standard Gaussian 
\begin{equation}
\Phi(t)\triangleq  \int_{-\infty}^t  \frac{1}{\sqrt{2\pi}}\exp(-u^2/2)\,\mathrm{d} u. 
\end{equation}
The inverse of $\Phi$ is defined as $\Phi^{-1}(\epsilon)\triangleq \sup\{t \in\mathbb{R} \, | \, \Phi(t)\le\epsilon\}$. 
 
In this paper, we   aim to characterize the $(\kappa_1,\kappa_2,\epsilon)$-capacity region  of the Gaussian IC in the strictly very strong interference regime, i.e., we determine $\mathcal{L}(\kappa_1,\kappa_2,\epsilon)$ for any $(\kappa_1,\kappa_2)\in [0,\infty)^2$ and $\epsilon\in (0,1)$.

\section{Main result} \label{MainResult}
The main result of this paper is summarized in  the following theorem. See Figure~\ref{fig:wz} for an illustration of the different cases.
\begin{mythm} \label{T1}
For any $0 < \epsilon < 1$, the $(\kappa_1, \kappa_2, \epsilon)$-second-order capacity region for the strictly very strong Gaussian interference channel in the following special cases is given by:

i) When $\kappa_1 = I_{11}$ and $\kappa_2 < I_{21}$ (vertical boundary),
\begin{align}
\mathcal{L}(\kappa_1, \kappa_2, \epsilon) = \left\{(L_1,L_2 ) \in\mathbb{R}^2 \bigg| \Phi\bigg(\frac{L_1}{\sqrt{V_1}}\bigg)\le\epsilon \right\};
\end{align}

ii) When $\kappa_1 = I_{11}$ and $\kappa_2 = I_{21}$ (corner point),
\begin{align} \label{eqn:case2}
\mathcal{L}(\kappa_1, \kappa_2, \epsilon)   =  \left\{  (L_1,L_2)\in\mathbb{R}^2\, \bigg|\,   \Phi\bigg(  - \frac{L_1}{\sqrt{V_1}}\bigg)  \Phi\bigg( - \frac{L_2}{\sqrt{V_2}}\bigg) \ge 1-\epsilon \right\};  
\end{align}

iii) When $\kappa_1 < I_{11}$ and $\kappa_2 = I_{21}$ (horizontal boundary),
\begin{align}
\mathcal{L}(\kappa_1, \kappa_2, \epsilon) = \left\{(L_1,L_2) \in\mathbb{R}^2\, \bigg| \,   \Phi\bigg(  \frac{L_2}{\sqrt{V_2}}\bigg)\le \epsilon \right\}.
\end{align}

\end{mythm}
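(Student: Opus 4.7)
The plan is to prove achievability and converse separately and show they match under the strictly very strong interference condition. I focus on the corner case (ii); cases (i) and (iii) follow by specialization. Throughout, I fix product Gaussian inputs $P_{X_1^n X_2^n}=\prod_{i=1}^n \mathcal{N}(0,P_1)\otimes\mathcal{N}(0,P_2)$, and choose auxiliary output distributions
\begin{align*}
Q_{Y_1^n|X_2^n}(y_1^n|x_2^n) &= \prod_{i=1}^n \mathcal{N}(y_{1i};\, h_{21}x_{2i},\, h_{11}^2 P_1+1), \\
Q_{Y_1^n}(y_1^n) &= \prod_{i=1}^n \mathcal{N}(y_{1i};\, 0,\, h_{11}^2 P_1+h_{21}^2 P_2+1),
\end{align*}
and symmetrically $Q_{Y_2^n|X_1^n}, Q_{Y_2^n}$. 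With this choice, $\tilde{i}^n_{11}$ and $\tilde{i}^n_{21}$ are each sums of $n$ i.i.d.\ terms with means $nI_{11}, nI_{21}$ and conditional variances $nV_1, nV_2$ when evaluated on codewords of norm $\sqrt{nP_j}$. Crucially, since the channel noises $Z_{1i},Z_{2i}$ are independent and the inputs are a product distribution, $\tilde{i}^n_{11}$ and $\tilde{i}^n_{21}$ are conditionally independent given $(X_1^n,X_2^n)$.

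For achievability, I apply a generalized Feinstein lemma for the IC (cf.\ \cite{LTM13}), which yields an $(M_{1n},M_{2n},n,\epsilon_n,P_1,P_2)$-code with $\epsilon_n$ bounded by the probability that at least one of four events fails, corresponding to the four modified information densities exceeding thresholds of the form $\log M_{1n}+\gamma_n$, $\log M_{2n}+\gamma_n$, and $\log(M_{1n}M_{2n})+\gamma_n$ (for both $\tilde{i}^n_{12}$ and $\tilde{i}^n_{22}$), with e.g.\ $\gamma_n=\tfrac{1}{2}\log n$. Writing $\log M_{jn}=nI_{jj}+\sqrt{n}L_j$, the first two thresholds lie at $nI_{jj}+O(\sqrt{n})$, matching the means of $\tilde{i}^n_{11}$ and $\tilde{i}^n_{21}$; the latter two equal $n(I_{11}+I_{21})+O(\sqrt{n})$, which is strictly below $n\min\{I_{12},I_{22}\}$ by a positive first-order gap implied by Definition~\ref{Def2} under strict inequalities. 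A Chebyshev estimate therefore shows the $\tilde{i}^n_{12},\tilde{i}^n_{22}$ events each succeed with probability $1-o(n^{-1/2})$. The remaining two events, by conditional independence and Berry-Esseen applied to each factor, jointly succeed with probability $\Phi(-L_1/\sqrt{V_1})\,\Phi(-L_2/\sqrt{V_2})+o(1)$, yielding the direct part.

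For the converse, I invoke a two-event Verd\'u-Han-type lemma: for any $(M_{1n},M_{2n},n,\epsilon_n,P_1,P_2)$-code,
\begin{align*}
1-\epsilon_n \leq \Pr\!\left(\tilde{i}^n_{11}\geq \log M_{1n}-\gamma_n,\; \tilde{i}^n_{21}\geq \log M_{2n}-\gamma_n\right) + 2e^{-\gamma_n},
\end{align*}
with probabilities under the induced code distribution. A standard argument following \cite{Hayashi2009,TanTom13} discretizes the power shells into finitely many types and reduces the worst case to inputs uniformly distributed on the spheres $\{x_j^n:\|x_j^n\|^2=nP_j\}$, for which $\tilde{i}^n_{11}, \tilde{i}^n_{21}$ retain the desired means and variances up to a $o(\sqrt{n})$ correction. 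Conditional independence then factors the probability, and Berry-Esseen on each factor yields the matching outer bound.

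The main obstacle is the converse shell reduction: the induced input distribution of an arbitrary code need not be Gaussian nor product-form, yet the bound must retain the clean factored Gaussian expression with $V_1, V_2$ as the constituent dispersions. Combining the sphere-covering argument of \cite{Hayashi2009,TanTom13} with the two-user product structure---so that the factorization of $\tilde{i}^n_{11}$ and $\tilde{i}^n_{21}$ survives the reduction---is where the bulk of the technical work sits. Cases (i) and (iii) follow by specialization: when $\kappa_2<I_{21}$, the threshold for $\tilde{i}^n_{21}$ lies strictly below $nI_{21}$ by a positive first-order gap, so the $\tilde{i}^n_{21}$ event has probability $1-o(1)$ in both the direct and converse bounds, and the remaining factor yields the single-user Gaussian dispersion constraint $L_1\leq\sqrt{V_1}\,\Phi^{-1}(\epsilon)$ (equivalently $\Phi(L_1/\sqrt{V_1})\leq\epsilon$) on user~$1$; case (iii) is symmetric.
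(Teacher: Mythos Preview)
Your high-level architecture is right: a four-event Feinstein bound for the direct part, a two-event Verd\'u--Han bound for the converse, and the strict very-strong-interference inequalities make the two sum-rate events asymptotically inactive so that direct and converse match. But the execution has a genuine gap on the achievability side and an unnecessary detour on the converse side.

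\textbf{Achievability.} You fix i.i.d.\ Gaussian inputs $P_{X_j^n}=\mathcal{N}(0,P_j)^{\otimes n}$. With this choice the cost-violation term $P_{X_j^n}(\mathcal{F}_{jn}^c)=\Pr(\|X_j^n\|^2>nP_j)$ in the Feinstein bound tends to $\tfrac12$ for each $j$ and does not vanish, so the bound is vacuous. Backing the variance off to $P_j(1-\delta_n)$ does not rescue second order: you need $\delta_n\gg n^{-1/2}$ for the power constraint to hold with high probability, but then the mean of $\tilde{i}^n_{11}$ shifts by $\Theta(n\delta_n)$, which is not $o(\sqrt{n})$. The paper instead takes $X_j^n$ uniform on the power sphere $\{\|x\|^2=nP_j\}$, so $P_{X_j^n}(\mathcal{F}_{jn}^c)=0$. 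The price is that the $\tilde{i}^n_{jk}$ are no longer sums of i.i.d.\ terms---your claim that they are is inconsistent with the sphere conditioning you invoke two lines later---and one cannot apply Berry--Esseen coordinatewise. The paper handles this via the MolavianJazi--Laneman ``central limit theorem for functions'': each density is rewritten as a smooth function of a $10$-dimensional empirical average of i.i.d.\ vectors, and a Berry--Esseen variant for such functions (Lemma~\ref{LemmaBE2}) delivers the Gaussian approximation with covariance $V_{\mathrm d}$ whose upper-left $2\times 2$ block is $\mathrm{diag}(V_1,V_2)$.

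\textbf{Converse.} Your proposed shell-discretization/type reduction is both harder than needed and not what the cited references actually do in the Gaussian case. After the standard $n\leftrightarrow n+1$ reduction to equal-power codewords, the paper's key observation (Lemma~\ref{LemmaCalConverse}) is that for \emph{every} fixed pair $(x_1^n,x_2^n)$ with $\|x_j^n\|^2=nP_j$, the conditional mean and covariance of $\tilde{\mathbf{i}}^n_{\mathrm c}$ are \emph{exactly} $n\mathbf{I}_{\mathrm c}$ and $n\,\mathrm{diag}(V_1,V_2)$---not merely up to $o(\sqrt{n})$. Hence the multivariate Berry--Esseen bound holds uniformly over codeword pairs, and averaging over the arbitrary code distribution gives the outer bound directly: no discretization, no types, no wringing. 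This pointwise-on-the-sphere constancy of the first two moments is precisely why the product structure of the inputs never needs to be enforced and why the factored form $\Phi(-L_1/\sqrt{V_1})\Phi(-L_2/\sqrt{V_2})$ falls out; it is the step you flagged as ``the bulk of the technical work,'' and it dissolves once you verify this Gaussian-specific identity.
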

\begin{proof}
This theorem is proved in the appendix.
\end{proof}


\begin{myexample}
We  visualize the result of case (ii) of Thereom \ref{T1} via an example. Consider a Gaussian IC where the dispersions are equal, i.e., $V_1=V_2$, and the average error probability $\epsilon = 0.001$. Clearly, by choosing $h_{12}$ and $h_{21}$ sufficiently large, we can guarantee that the Gaussian  IC is in the strictly  very strong interference regime (see Example~\ref{eg:first}). The second-order capacity region $\mathcal{L}(\kappa_1, \kappa_2, \epsilon)$ of case (ii)   where $(\kappa_1, \kappa_2) =( I_{11}, I_{21})$  is illustrated in Figure~\ref{fig:ex}. Because $\epsilon<1/2$,  the second-order capacity region $\mathcal{L}(\kappa_1, \kappa_2, \epsilon)$ lies entirely in the third quadrant of $\mathbb{R}^2$. Due to the fact that $V_1=V_2$, the second-order capacity region $\mathcal{L}(\kappa_1, \kappa_2, \epsilon)$ for case (ii) is also symmetric about the line $L_1 = L_2$. 
\end{myexample}

\begin{figure}
\centering
\includegraphics[width=0.57\textwidth]{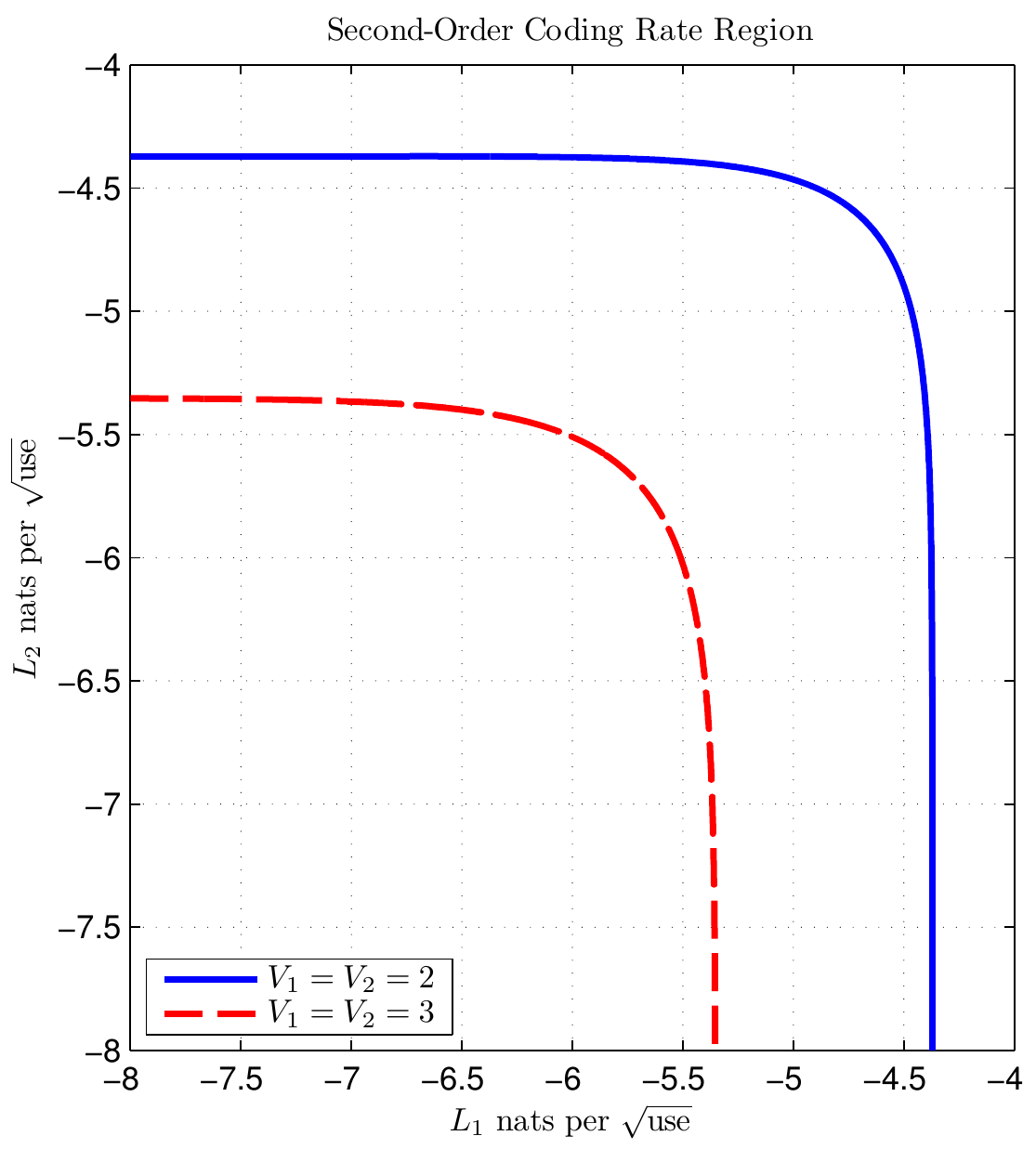}
\caption{The second-order capacity region $\mathcal{L}(\kappa_1, \kappa_2, \epsilon)$ of case $2$ when $\epsilon = 0.001$}.
\label{fig:ex}
\end{figure}

\subsection{Remarks Concerning Theorem~\ref{T1}}

\begin{enumerate}

\item  The result is applicable to any $(\kappa_1, \kappa_2) \in [0,\infty)^2$. If $(\kappa_1,\kappa_2)$ is  in the interior of $\mathcal{C}$, then it can be shown that $\mathcal{L}(\kappa_1, \kappa_2, \epsilon) = \mathbb{R}^2$. If $(\kappa_1,\kappa_2)$ is   in the exterior of   $\mathcal{C}$, then $\mathcal{L}(\kappa_1, \kappa_2, \epsilon) = \emptyset$. This implies the {\em strong converse}. Thus, the strong converse, which was hitherto not established for the Gaussian IC with very strong interference, is a by-product of our analyses.  The only interesting cases, in which $(\kappa_1, \kappa_2)$ is on the boundary of the capacity region, are presented in Theorem~\ref{T1}.

\item In case (i), the $(\kappa_1, \kappa_2,\epsilon)$-capacity region depends on $\epsilon$ and $V_1$ only. This region is more succinctly described as \begin{align}
L_1 \leq  \sqrt{V_1}\Phi^{-1}(\epsilon),\quad\mbox{and}\quad  L_2\in\mathbb{R}. \label{eqn:user1_disp} 
\end{align}
Note that $\sqrt{V_1}\Phi^{-1}(\epsilon)$ is exactly the second-order coding rate of the AWGN channel between transmitter $\mathrm{Tx}_1$ and receiver $\mathrm{Rx}_1$ when there is no interference from transmitter $\mathrm{Tx}_2$ \cite{Hayashi2009}. The fact that user $2$'s parameters do not feature in \eqref{eqn:user1_disp}   is because $\kappa_2 < I_{21}$. Note that $\kappa_2 < I_{21}$ implies that $\mathrm{Tx}_2$  operates  at a rate strictly below the   capacity of the second channel $I_{21}$. In this case, the second channel operates in the large-deviations (error exponents) regime so the second constraint is not featured in our dispersion analysis. This is because the error probability is exponentially small in this regime. See \cite{TanKosut12,Haim12, ST13, Nomura13}.  By symmetry, case (iii) is similar to case (i).

\item In case (ii),  the $(\kappa_1, \kappa_2,\epsilon)$-second-order capacity region is a function of $\epsilon$ and  {\em both} $V_1$ and $V_2$ because we are operating at rates near the {\em corner point} of $\mathcal{C}$. The two constraints on the rates come into play in the characterization of $\mathcal{L}(\kappa_1,\kappa_2,\epsilon)$. 
Roughly speaking, $\Phi(-L_j/\sqrt{V_j})$ is the probability that the $j^{\mathrm{th}}$-decoder decodes correctly if the number of codewords of the $j^{\mathrm{th}}$-user is 
\begin{equation}
M_{jn} = \big\lfloor\exp \big( n \kappa_j  + \sqrt{n}L_j + o(\sqrt{n } )  \big) \big\rfloor. \label{eqn:size_code}
\end{equation}
Thus, the product $\Phi(-L_1/\sqrt{V_1}) \Phi(-L_2/\sqrt{V_2})$, which is constrained to be larger than $1-\epsilon$ in \eqref{eqn:case2}, is the probability that {\em both} messages are decoded correctly assuming that both channels operate {\em independently}. More explicitly,  using the definition of the error probability criterion in~\eqref{eqn:error_prob}, we have that  
\begin{align}
 \Pr \left(\hat{S}_1 = S_1 \text{ and } \hat{S}_2 = S_2 \right)  \geq 1 - \epsilon .
 \end{align}
Assuming independence, this means that 
 \begin{align}
   \Pr \left(\hat{S}_1 = S_1 \right) \Pr \left(\hat{S}_2 = S_2 \right)  \geq 1 - \epsilon. 
   \end{align}
 Denoting $o(1)$ as a sequence that tends to zero as the blocklength grows, we observe that 
\begin{equation}
   \Pr\left(\hat{S}_j = S_j \right) =\Phi \bigg(-\frac{L_j}{\sqrt{V_j}}\bigg) + o(1)
   \end{equation}   
 if \eqref{eqn:size_code} holds  (a result by Hayashi \cite[Thm.~4]{Hayashi2009}). In this way, we recover  the main result in~\eqref{eqn:case2}.  Since $V_1=\mathsf{V}(h_{11}^2 P_1)$ and $V_2=\mathsf{V}(h_{22}^2 P_2)$ are the dispersions of the point-to-point Gaussian channels without interference, this is exactly analogous to Carleial's result for Gaussian ICs with very strong interference \cite{Carleial75}. In other words, in this regime, the channel dispersions of the constituent channels are not affected. This explains the title of the paper---namely that in this very special scenario, interference does not affect (reduce) the  dispersions of the constituent channels. In addition, no cross dispersion terms  are present in~\eqref{eqn:case2} unlike other network problems~\cite{TanKosut12, Nomura13, ST13}. This is due to the independence of the noises $Z_{1i}$ and $Z_{2i}$ as well as the strictly very strong interference assumption.


\item One of the  input distributions that achieves  the capacity, error exponent, dispersion and even the third-order coding rate of  the Gaussian point-to-point channel \cite{Shannon59, Polyanskiy2010, TanTom13}, is the uniform distribution on the  power sphere. MolavianJazi-Laneman~\cite{ML13} derived global achievable dispersions for the two-user  Gaussian MAC using uniform distributions on power spheres.  In this work,  we also use the {\em uniform input distributions on  power spheres}.  It is not easy to use the  {\em cost constrained ensemble} in~\cite{ST13} as that  input  distribution is more suited to, for example, superposition coding.

\item The proof of the  direct part makes use of a generalized version of Feinstein's lemma~\cite{Feinstein}, which involves four error events. We also use the {\em central limit theorem for functions} by MolavianJazi and Laneman~\cite{ML13} to ``lift'' the problem to a higher dimension, in fact $10$-dimensional Euclidean space,  ensuring that the i.i.d.\ version of the multivariate Berry-Esseen theorem \cite{Gotze91,BH10,WKT13} may be employed. The converse   makes use of a generalized version of the Verd\'u-Han lemma~\cite{VH94}, which involves only two error events.  At a high level, we use the strictly very strong interference condition to reduce the number of error events in the direct part, so that it matches the converse. For   ICs in the  very strong interference regime~\cite{Carleial75}, the intuition is that each receiver can reliably decode information from the non-intended transmitter. Interestingly, this intuition carries over for second-order (dispersion) analysis with the caveat that the interference must be {\em strictly} very strong. 

\item Finally, it is somewhat surprising that in the converse, even though we must ensure that the transmitter outputs are independent,  we do not  need to use the wringing technique, invented by Ahlswede~\cite{Ahl82}  and  used originally to prove that the DM-MAC admits a strong converse. This is due to Gaussianity which allows us to show that the first- and second-order  statistics of a certain set of information densities are independent of $x_1^n$ and $x_2^n$ on    power spheres.  See~\eqref{eqn:expect}-\eqref{eqn:cov1}.  
\end{enumerate}

\section{Reflections} \label{FutureWork}
In this work, we characterized the second-order coding rates of the Gaussian interference channel in the strictly very strong interference regime. The strictly very strong interference assumption reduces the number of error events in the direct part so that it matches the converse. It would be interesting to find the second-order capacity region  in the other regimes. New non-asymptotic achievability and converse  bounds are   needed  for other cases. In particular, it is intriguing to see what the second-order capacity region for the interference channel  in the strong interference regime is. Note that in the strong interference regime, the interference channel behaves like a pair of MACs but unfortunately the second-order capacity region for the MAC remains unknown \cite{TanKosut12,ML13,Huang12,ST13}. The achievability scheme in this work is also applicable to the interference channel in the strong interference regime. A non-trivial problem here is to derive a tighter  converse than that prescribed by Lemma \ref{LemmaVH} to be evaluated assuming only  strong interference.

The class of {\em mixed channels} forms an  important class of models  for theoretical study  as they are the canonical class of non-ergodic channels \cite{Han2003}. The second-order source coding rate region has been considered for the mixed correlated source for the Slepian-Wolf problem in \cite{Nomura13}. The corresponding point-to-point channel coding problem was also studied in \cite{PPV11, TomTan13b}. It would be also interesting to   find the second-order capacity region for the mixed Gaussian IC. The key difficulty is that characterizing the second-order capacity region for the mixed Gaussian IC appears to involve manipulating the modified information densities and the auxiliary     output distributions. Previous works in mixed channels in  \cite{Han2003, Nomura13} do not involve  auxiliary     output distributions. New achievability and converse techniques will be needed to find the second-order capacity region for the mixed Gaussian~IC.
 
Lastly, it appears that the corner point result in \eqref{eqn:case2} may be generalized to $3$ (or more) sender-receiver pairs simply by setting the product $\prod_{j} \Phi(-L_j/\sqrt{V_j})$ to be no smaller than $1-\epsilon$. The question then becomes: ``What is the appropriate generalization of the assumption of being in the strictly very strong interference regime to the  $3$ (or more) sender-receiver pair setting?''   


\section{Appendix}

\subsection{Proof of Theorem \ref{T1}: Converse Part}
In this subsection, we present the converse proof of Theorem \ref{T1}.  By a standard $n\leftrightarrow n+1$ argument \cite[Sec.~X]{Shannon59} \cite[Lem.~39]{Polyanskiy2010}, we may assume that the power constraints are satisfied with equality.  We first start with an non-asymptotic bound, which is a generalized version of Verd\'u-Han lemma~\cite[Lem.~4]{VH94}.   The proof of this lemma is  given in Subsection \ref{Sec:ProofVH}.
\begin{mylemma} \label{LemmaVH}
For every $n \in \mathbb{N}$, for every $\gamma >0$, and for any auxiliary     output distributions   $Q_{Y_1^n|X_2^n}$ and $Q_{Y_2^n|X_1^n}$, every $(M_{1n},M_{2n},n,\epsilon_n,P_1,P_2)$-code for the Gaussian IC satisfies
\begin{align}
\epsilon_n &\geq \Pr ( \tilde{i}_{11}^n (X_1^n X_2^n Y_1^n ) \leq \log M_{1n} - n\gamma  \nonumber \\
           &\quad \text{ or } \tilde{i}_{21}^n (X_1^n X_2^n Y_2^n) \leq \log M_{2n} - n\gamma ) - 2e^{-n\gamma},
\end{align}
where $\tilde{i}_{11}$ and $\tilde{i}_{21}$ are modified information densities defined in \eqref{eqn:i11_tilde} and \eqref{eqn:i12_tilde} respectively and   $X_j^n$ is uniformly distributed  over the $j^{\mathrm{th}}$  codebook and so $\|X_j^n\|^2 = nP_j$ with probability one. 
\end{mylemma}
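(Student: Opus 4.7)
The plan is to extend the classical Verd\'u--Han converse to the two-user IC by: (i) factoring the joint channel via the assumed noise independence $Z_{1i}\perp Z_{2i}$, (ii) applying an elementary product inequality that decouples the two receivers, and (iii) performing a change of measure to $Q_{Y_1^n|X_2^n}$ and $Q_{Y_2^n|X_1^n}$ separately for each user.

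First I would write $1-\epsilon_n$ as an average over message pairs of $W^n(D_{1,m_1}\times D_{2,m_2}\,|\,x_1^n(m_1),x_2^n(m_2))$, where $D_{j,m_j}\triangleq g_{jn}^{-1}(m_j)$ are the (disjoint) decoding regions of user $j$. Because of noise independence, $W^n(\cdot|x_1^n x_2^n)=W_1^n(\cdot|x_1^n x_2^n)\,W_2^n(\cdot|x_1^n x_2^n)$, so the joint probability of correct decoding factors as $W_1^n(D_{1,m_1}|\cdot)\,W_2^n(D_{2,m_2}|\cdot)$.

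Next, for each fixed $(x_1^n,x_2^n)$, define the bad sets $\mathcal{B}_j=\{y_j^n:\tilde{i}_{j1}^n\le \log M_{jn}-n\gamma\}$ and decompose $W_j^n(D_{j,m_j}|\cdot)=W_j^n(D_{j,m_j}\cap\mathcal{B}_j^c|\cdot)+W_j^n(D_{j,m_j}\cap\mathcal{B}_j|\cdot)$. Expanding the product and using the elementary inequality $(a+b)(c+d)\le ac+b+d$ for $a,b,c,d\in[0,1]$ with $a+b,c+d\le 1$ gives
\begin{equation*}
W_1^n(D_{1,m_1})W_2^n(D_{2,m_2})\le W_1^n(D_{1,m_1}\cap\mathcal{B}_1^c)W_2^n(D_{2,m_2}\cap\mathcal{B}_2^c)+W_1^n(D_{1,m_1}\cap\mathcal{B}_1)+W_2^n(D_{2,m_2}\cap\mathcal{B}_2),
\end{equation*}
which effectively decouples the two users into a ``typical'' piece plus two error terms.

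For the error terms I would change measure using the definition of $\tilde{i}_{11}^n$ in \eqref{eqn:i11_tilde}: on $\mathcal{B}_1$, $W_1^n(y_1^n|x_1^n,x_2^n)\le M_{1n}e^{-n\gamma}\,Q_{Y_1^n|X_2^n}(y_1^n|x_2^n)$. Hence
\begin{equation*}
\sum_{m_1} W_1^n(D_{1,m_1}\cap\mathcal{B}_1\,|\,x_1^n(m_1),x_2^n(m_2))\le M_{1n}e^{-n\gamma}\sum_{m_1}Q_{Y_1^n|X_2^n}(D_{1,m_1}|x_2^n(m_2))\le M_{1n}e^{-n\gamma},
\end{equation*}
using disjointness of $\{D_{1,m_1}\}_{m_1}$ and the fact that $Q_{Y_1^n|X_2^n}(\cdot|x_2^n)$ is a probability measure. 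Dividing by $M_{1n}M_{2n}$ and summing over $m_2$ yields an $e^{-n\gamma}$ contribution; the $\mathcal{B}_2$ term is handled identically by symmetry. For the typical piece, bounding $W_j^n(D_{j,m_j}\cap\mathcal{B}_j^c|\cdot)\le W_j^n(\mathcal{B}_j^c|\cdot)$ and averaging over codewords (and invoking noise independence once more) identifies it with $\Pr(\mathcal{B}_1^c\cap\mathcal{B}_2^c)$ under the law where $X_j^n$ is uniform on codebook $j$ and $(Y_1^n,Y_2^n)\sim W_1^n\,W_2^n(\cdot|X_1^n,X_2^n)$. Complementing and combining gives $\epsilon_n\ge \Pr(\mathcal{B}_1\cup\mathcal{B}_2)-2e^{-n\gamma}$, as desired.

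The main subtle point --- rather than any serious obstacle --- is the choice of auxiliary distributions: $Q_{Y_1^n|X_2^n}$ conditions on $x_2^n$ but not $x_1^n$, and dually for $Q_{Y_2^n|X_1^n}$. This is exactly the structure needed so that, for each fixed $m_2$, the sum $\sum_{m_1}Q_{Y_1^n|X_2^n}(D_{1,m_1}|x_2^n(m_2))\le 1$ goes through (since $D_{1,m_1}$ depends only on $y_1^n$); a more aggressive conditioning such as $Q_{Y_1^n|X_1^n X_2^n}$ would obstruct the change-of-measure bound. Beyond this bookkeeping and the elementary decoupling $(a+b)(c+d)\le ac+b+d$, the argument is a direct two-user rendition of the classical Verd\'u--Han technique, made possible by the independence of $Z_{1i}$ and $Z_{2i}$.
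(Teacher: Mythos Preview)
Your proof is correct and reaches the same bound, but it proceeds along a different route from the paper's argument. The paper works directly with the ``bad'' event $G_1\cup G_2=\{\tilde{i}_{11}^n\le\log M_{1n}-n\gamma\}\cup\{\tilde{i}_{21}^n\le\log M_{2n}-n\gamma\}$ in the full space $\mathcal{X}_1^n\times\mathcal{X}_2^n\times\mathcal{Y}_1^n\times\mathcal{Y}_2^n$ and splits it according to whether $(y_1^n,y_2^n)\in D_{1,m_1}\times D_{2,m_2}$ or not; the first piece is at most $\epsilon_n$, and the second is handled by a union bound $B_1'\cup B_2'$, marginalization over the irrelevant output variable, and then the change of measure. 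By contrast, you start from the probability of \emph{correct} decoding, factor it via noise independence into $W_1^n(D_{1,m_1})\,W_2^n(D_{2,m_2})$, and decouple with the scalar inequality $(a+b)(c+d)\le ac+b+d$. Both arguments are short; the main practical difference is that the paper's set-theoretic decomposition never uses the product structure $W^n=W_1^nW_2^n$ --- only the marginals $P_{Y_j^n X_1^n X_2^n}$ enter --- so it goes through verbatim for an arbitrary two-output interference channel, whereas your argument invokes noise independence twice (once to factor $1-\epsilon_n$ and once to reassemble $W_1^n(\mathcal{B}_1^c)\,W_2^n(\mathcal{B}_2^c)$ into $\Pr(\mathcal{B}_1^c\cap\mathcal{B}_2^c)$). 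In the present paper that assumption is in force, so your proof is valid here; the paper's version is simply the more portable one.
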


\begin{myrm}
Intuitively, the proof of Lemma \ref{LemmaVH} relies on the fact that a system with help of a genie, which provides the transmitted information of   transmitter $2$ to decoder $1$, and the transmitted information from transmitter $1$ to decoder $2$, will always do no worse than a system without help from a genie.
\end{myrm}

Fix any pair of rates $(\kappa_1, \kappa_2) $ on the boundary of  $\mathcal{C}$ in \eqref{eqn:carl}. Consider any second-order pair $(L_1,L_2)$ that is $(\kappa_1, \kappa_2, \epsilon)$-achievable for the Gaussian IC. This implies that there exists a sequence of  $(M_{1n}, M_{2n},n,\epsilon_n,P_1,P_2)$-codes satisfying~\eqref{eqn:second_order}.

By the definition of $\liminf$, for any $\beta >0$, there exists an integer $N_\beta$ such that  for all $n > N_\beta$
\begin{align}
 \log M_{jn} - n \kappa_j \geq \sqrt{n}( L_j - \beta )   . \label{eqn:use_beta}
\end{align}


Let $\mathcal{L}_{\mathrm{eq}}(\kappa_1,\kappa_2,\epsilon)$ be the $(\kappa_1,\kappa_2,\epsilon)$-second-order capacity region of the IC with equal power constraints, i.e.\ each codeword $x_j^n$ satisfies $\sum _{k=1} ^{n}  x_{jk}^2 = nP_j$ for $j=1,2$. As mentioned above, it can  be shown that (cf.~\cite[Lem.~39]{Polyanskiy2010}) $\mathcal{L}_{\mathrm{eq}}(\kappa_1,\kappa_2,\epsilon) = \mathcal{L}(\kappa_1,\kappa_2,\epsilon)$. Therefore, in this converse proof, it is sufficient to assume   equal power constraints. \\

Define the auxiliary     output distributions   
\begin{align}
\hat{Q}_{Y_1|X_2}(y_1|x_2) &\triangleq \mathcal{N}(y_1; h_{21} x_2, h_{11}^2 P_1 + 1) \label{EQ1} \\
\hat{Q}_{Y_2|X_1}(y_2|x_1) &\triangleq \mathcal{N}(y_2; h_{12} x_1, h_{22}^2 P_2 + 1) .  \label{EQ2} 
\end{align}
These are the  conditional  output distributions   of the Gaussian IC when the inputs are $X_1\sim \mathcal{N}(0,P_1)$ and $X_2 \sim \mathcal{N}(0,P_2)$. 

Choose the  conditional  output distributions   $Q_{{Y}_1^n|{X}_2^n}$ and $Q_{{Y}_2^n|{X}_1^n}$  in Lemma \ref{LemmaVH}, respectively  as the $n$-fold products of $\hat{Q}_{Y_1|X_2}(y_1|x_2)$ and $\hat{Q}_{Y_2|X_1}(y_2|x_1)$, which are defined above. Next, choose $\gamma = \frac{\log n}{2n}$.  Let $V_{\mathrm{c}}$ be the $2\times 2$ diagonal matrix with $V_1$ and $V_2$ along its diagonals. 

Next, we have the following lemma whose proof is presented in full in Subsection \ref{SecCalC}. 
\begin{mylemma} \label{LemmaCalConverse}
For all $x_1^n$ and $x_2^n$ satisfying $\|x_j^n\|^2=nP_j$  we have
\begin{align}
\mathbb{E} \left[\frac{1}{n }\sum_{k=1}^n \tilde{\mathbf{i}}_{{\mathrm{c}}k} ( x_{1k} x_{2k} Y_{1k} Y_{2k})\right] &=   \mathbf{I}_{\mathrm{c}},\quad\mbox{and} \label{eqn:expect}\\
\mathrm{cov } \left[\frac{1}{\sqrt{n}}\sum_{k=1}^n  \tilde{\mathbf{i}}_{{\mathrm{c}}k} ( x_{1k} x_{2k} Y_{1k} Y_{2k})\right] &= V_{\mathrm{c}} ,\label{eqn:cov1}
\end{align}
where $\tilde{\mathbf{i}}_{{\mathrm{c}}k} $ is the random vector with components given by \eqref{eqn:i11_tilde} and \eqref{eqn:i12_tilde}.
\end{mylemma}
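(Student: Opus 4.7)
The plan is to reduce the lemma to a direct computation of the mean and covariance of a single i.i.d.\ pair of modified information densities $(\tilde i_{11,k},\tilde i_{21,k})$, and to show that the crucial $x_{2k}$ dependence cancels by the choice of the auxiliary output distributions. First, I would write out each $\tilde i_{11,k}$ in closed form: since both the true conditional $W_1(y_1|x_1,x_2)=\mathcal{N}(y_1;h_{11}x_1+h_{21}x_2,1)$ and the auxiliary $\hat Q_{Y_1|X_2}(y_1|x_2)=\mathcal{N}(y_1;h_{21}x_2,h_{11}^2P_1+1)$ are Gaussians centered at an $x_2$-dependent translate of the same quantity $h_{21}x_2$, the log-ratio depends on $y_1$ only through $y_1-h_{21}x_{2k}$. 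Substituting $Y_{1k}=h_{11}x_{1k}+h_{21}x_{2k}+Z_{1k}$ then makes the term $h_{21}x_{2k}$ disappear entirely, leaving an expression of the form $\alpha + \beta x_{1k}^2 + \gamma x_{1k} Z_{1k} + \delta Z_{1k}^2$ for explicit constants $\alpha,\beta,\gamma,\delta$ depending only on $h_{11}^2 P_1$.

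Given this closed form, I would compute $\mathbb{E}[\tilde i_{11,k}]$ using $\mathbb{E}[Z_{1k}]=0$ and $\mathbb{E}[Z_{1k}^2]=1$. The answer is affine in $x_{1k}^2$. Summing over $k$ and invoking the equal-power constraint $\sum_{k} x_{1k}^2 = nP_1$ collapses all $x_{1k}$ dependence into the constant $nP_1$, and a short simplification yields exactly $nI_{11}=\tfrac{n}{2}\log(1+h_{11}^2P_1)$. The analogous argument for $\tilde i_{21,k}$ (with $\hat Q_{Y_2|X_1}$ playing the symmetric role) gives $nI_{21}$, establishing \eqref{eqn:expect}.

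For the covariance, I would compute $\mathrm{Var}(\tilde i_{11,k})$ from the same closed form, using $\mathrm{Var}(Z_{1k})=1$, $\mathrm{Var}(Z_{1k}^2)=2$, and the crucial symmetry-based identity $\mathrm{Cov}(Z_{1k},Z_{1k}^2)=\mathbb{E}[Z_{1k}^3]=0$ (odd moments of a centered Gaussian). The resulting variance is again affine in $x_{1k}^2$; summing over $k$ and applying $\sum_k x_{1k}^2=nP_1$ produces $n\cdot\frac{h_{11}^2P_1(h_{11}^2P_1+2)}{2(1+h_{11}^2P_1)^2}=nV_1$, and identically $nV_2$ for the second coordinate. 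The off-diagonal entry is immediate: conditional on the fixed codewords $(x_1^n,x_2^n)$, $\tilde i_{11,k}$ is a deterministic function of $Z_{1k}$ alone and $\tilde i_{21,k}$ is a deterministic function of $Z_{2k}$ alone, so the independence of $\{Z_{1k}\}$ from $\{Z_{2k}\}$ (assumed in the channel model) makes the cross-covariance vanish for each $k$, and independence across $k$ eliminates all mixed-time terms.

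There is no real obstacle; the lemma is essentially an algebraic identity made clean by two structural features: (a) the $x_{2k}$-cancellation, which explains why the statement holds pointwise for every $x_2^n$ on the power sphere rather than only on average, and (b) the vanishing of $\mathbb{E}[Z^3]$, which kills the only term in $\mathrm{Var}(\tilde i_{11,k})$ that would otherwise be linear (rather than quadratic) in $x_{1k}$ and would therefore not be removable by the quadratic power constraint. The only place one must be careful is in bookkeeping the constants so that the final expressions match $\mathsf{C}(h_{jj}^2P_j)$ and $\mathsf{V}(h_{jj}^2P_j)$ exactly.
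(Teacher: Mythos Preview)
Your proposal is correct and follows essentially the same route as the paper: write $\tilde i_{11,k}$ as a Gaussian log-ratio, substitute $Y_{1k}=h_{11}x_{1k}+h_{21}x_{2k}+Z_{1k}$ to cancel the $x_{2k}$-dependence, compute per-coordinate means and variances, and sum using $\sum_k x_{jk}^2=nP_j$; the off-diagonal zero comes from the assumed independence of $Z_{1k}$ and $Z_{2k}$. Your explicit observation that $\mathbb{E}[Z_{1k}^3]=0$ is what kills the potentially linear-in-$x_{1k}$ cross term in the variance is a nice clarification that the paper leaves implicit.
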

This lemma is the crux of the converse proof.  Note that the covariance matrix in \eqref{eqn:cov1} is diagonal and this results in the decoupling of the events in the corner point case given by~\eqref{eqn:case2}. The diagonal nature of \eqref{eqn:cov1} arises, in part, from the independence of the noises $Z_{1i}$ and $Z_{2i}$ for each time $i = 1,\ldots, n$. 

Let $t_{\mathrm{c}} \triangleq \frac{1}{n} \sum_{k=1}^n \mathbb{E} [\|\tilde{\mathbf{i}}_{{\mathrm{c}}k} ( x_{1k} x_{2k} Y_{1k} Y_{2k})\|^3]$ be the third absolute moment and $\phi_{\mathrm{c}} \triangleq \frac{254\sqrt{2}t_{\mathrm{c}}}{\lambda_{\min}(V_{\mathrm{c}})^{3/2}}$, where $\lambda_{\min}(V_{\mathrm{c}})$ is the minimum eigenvalue of $V_{\mathrm{c}}$.  
Define  the rate pair $\mathbf{R}_{\mathrm{c}} \triangleq [\frac{\log M_{1n}}{n} ,  \frac{\log M_{2n}}{n}]^T$.  Note that $V_{\mathrm{c}} \succ 0$ because the channel gains and powers are all positive. Also $t_{\mathrm{c}}<\infty$ from \cite[App.~A]{ST13}.  Thus, $\phi_{\mathrm{c}}$ is finite. Define
\begin{equation}
\Psi\big([t_1, t_2] ; \mathbf{m}, \bm{\Sigma} \big) \triangleq \int_{-\infty}^{t_1} \int_{-\infty}^{t_2} \mathcal{N}(\mathbf{u}; \mathbf{m}, \bm{\Sigma})\,\mathrm{d}\mathbf{u}
\end{equation}
as the bivariate generalization of the Gaussian cumulative distribution function.
Then we have
\begin{align}
 \Delta(x_1^n,x_2^n)
&\triangleq \Pr \left(\frac{1}{n} \sum_{k=1}^n \tilde{\mathbf{i}}_{{\mathrm{c}}k} ( x_{1k} x_{2k} Y_{1k} Y_{2k}) > \mathbf{R}_{\mathrm{c}} - \gamma \mathbf{1}  \right) \nonumber \\
&= \Pr \left(\frac{1}{\sqrt{n}}  \sum_{k=1}^n  \tilde{\mathbf{i}}_{{\mathrm{c}}k} - \sqrt{n}\mathbf{I}_{\mathrm{c}} > \sqrt{n}(\mathbf{R}_{\mathrm{c}} -\mathbf{I}_{\mathrm{c}} - \gamma \mathbf{1} ) \right)  \nonumber\\
&\stackrel{(a)}\leq \Psi(-\sqrt{n}(\mathbf{R}_{\mathrm{c}} -\mathbf{I}_{\mathrm{c}} - \gamma \mathbf{1}); \mathbf{0},V_{\mathrm{c}}) + \frac{\phi_{\mathrm{c}}}{\sqrt{n}} \nonumber\\
&\stackrel{(b)}\leq \Psi(-\sqrt{n}(\mathbf{R}_{\mathrm{c}} -\mathbf{I}_{\mathrm{c}} ); \mathbf{0}, V_{\mathrm{c}}) + O \left(\frac{\log n}{\sqrt{n}} \right), \label{E119}
\end{align}
where 
$(a)$ follows from the application of a variant of the multivariate Berry-Esseen Theorem, which is stated in Lemma \ref{LemmaBE}; 
and $(b)$ follows from Taylor expansion  of the function $\Psi(\mathbf{t}; \mathbf{0}, V_{\mathrm{c}})$, which is differentiable with respect to $\mathbf{t}$.

From Lemma \ref{LemmaVH}, we have 
\begin{align}
\epsilon_n &\geq 1 - \Pr \left(\frac{1}{n}\tilde{\mathbf{i}}^n_{\mathrm{c}} (X_{1}^n X_{2}^n Y_{1}^n Y_{2}^n ) > \mathbf{R}_{\mathrm{c}}- \gamma \mathbf{1} \right) - 2 e^{-n\gamma} \nonumber\\
           &= 1 - \mathbb{E} \left[ \Delta(X_1^n,X_2^n) \right] - 2 e^{-n\gamma}. \label{E121}
\end{align}
Note that $e^{-n\gamma} = \frac{1}{\sqrt{n}}$. Combining (\ref{E119}) and (\ref{E121}), we have
\begin{align}
\epsilon_n &\geq 1 - \Psi(-\sqrt{n}(\mathbf{R}_{\mathrm{c}}-\mathbf{I}_{\mathrm{c}} ); \mathbf{0}, V_{\mathrm{c}}) - O\left(\frac{\log n}{\sqrt{n}}\right) - \frac{2}{\sqrt{n}} \nonumber\\
           &\stackrel{(a)}\geq 1 - \Psi\left(\begin{bmatrix}
                           \sqrt{n}(I_{11} - \kappa_1) - L_1 + \beta \\
                           \sqrt{n}(I_{21} - \kappa_2) - L_2 + \beta \\   
                          \end{bmatrix}; \mathbf{0}, V_{\mathrm{c}} \right) 
                           - O\left(\frac{\log n}{\sqrt{n}}\right) - \frac{2}{\sqrt{n}}  \label{Econ2}
\end{align}
where (a) holds for all $n>N_\beta$ and follows because  $\mathbf{t}\mapsto\Psi(\mathbf{t}; \mathbf{0}, V_{\mathrm{c}})$ is monotonically increasing in $\mathbf{t}$ and (\ref{eqn:use_beta}). We now consider  three different cases.\\

\textbf{Case} $1$: When $\kappa_1 = I_{11}$ and $\kappa_2 < I_{21}$ \\
For any fixed $L_2$, if $\kappa_2 < I_{21}$, we have $\sqrt{n}(I_{21} - \kappa_2) - L_2 + \beta \to +\infty$. Thus, the second term on the RHS of~(\ref{Econ2}) converges to $\Psi \left(- L_1 + \beta ; 0, V_1 \right)  = \Phi\big( \frac{-L_1+\beta}{\sqrt{V_1}}\big)$. Taking $\limsup$ on both sides of~(\ref{Econ2}), and using~(\ref{eqn:use_beta}), we have
\begin{align}
\epsilon \geq \limsup_{n \to \infty} \epsilon_n \geq 1- \Phi\bigg(\frac{ -L_1+\beta}{ \sqrt{V_1}}\bigg).
\end{align}
Since this is true for any $\beta >0$,  we may let $\beta\downarrow 0$ and deduce that  
\begin{equation}
\Phi\bigg( \frac{L_1}{\sqrt{V_1}}\bigg)\le\epsilon.
\end{equation}
Case $1$ is proved.\\

\textbf{Case} $2$: When $\kappa_1 = I_{11}$ and $\kappa_2 = I_{21}$\\
In this case,  the second term on the RHS of (\ref{Econ2}) converges to $\Psi \left([- L_1 + \beta,-L_2 +\beta]^T ; 0, V_{\mathrm{c}} \right)$. The rest of the arguments are similar to that in case $1$. Note that because $V_{\mathrm{c}}$ is diagonal, 
\begin{equation}
\Psi \left([- L_1,-L_2]^T ; 0, V_{\mathrm{c}} \right) = \Phi\bigg(-\frac{L_1}{\sqrt{V_1}}\bigg)\Phi\bigg(-\frac{L_2}{\sqrt{V_2}}\bigg). \label{eqn:diag}
\end{equation}

\textbf{Case} $3$: When $\kappa_1 < I_{11}$ and $\kappa_2 = I_{21}$  \\
By symmetry, case $3$ is proved similarly to case $1$. \\

%

\subsection{Proof of Theorem \ref{T1}: Direct Part}
In this subsection, we present the  achievability proof of Theorem \ref{T1}. The following non-asymptotic bound,  a generalized version of Feinstein's lemma~\cite{Feinstein}, will be employed in the proof. The proof of this lemma is given in Subsection \ref{Sec:ProofF}.

\begin{mylemma}  \label{LemmaF}
Fix a joint distribution satisfying (\ref{Ed}). For any $n \in\mathbb{N}$, any $\gamma>0$, and any auxiliary     output distributions   $Q_{Y_1^n|X_2^n}$, $Q_{Y_1^n}$, $Q_{Y_2^n|X_1^n}$ and $Q_{Y_2^n}$, there exists an $(M_{1n},M_{2n},n, \epsilon_n,P_1,P_2)$-code for the Gaussian IC,  such that
\begin{align}
\epsilon_n &\leq \Pr (\mathcal{E}_{11} \cup \mathcal{E}_{12} \cup \mathcal{E}_{21} \cup \mathcal{E}_{22}) + K e ^{-n\gamma} 
         + P_{X_1^n} ( \mathcal{F}_{1n}^c) +  P_{X_2^n} (\mathcal{F}_{2n}^c)
\end{align}
where
\begin{align}
\mathcal{E}_{11} \triangleq \{ \tilde{i}^n_{11}(X_1^n X_2^n Y_1^n) &\leq \log M_{1n} + n\gamma \} \\
\mathcal{E}_{21} \triangleq \{ \tilde{i}^n_{21}(X_1^n X_2^n Y_2^n) &\leq \log M_{2n} + n\gamma \} \\
\mathcal{E}_{12} \triangleq \{ \tilde{i}^n_{12}(X_1^n X_2^n Y_1^n) &\leq \log M_{1n} M_{2n} + n\gamma \} \\
\mathcal{E}_{22} \triangleq \{ \tilde{i}^n_{22}(X_1^n X_2^n Y_2^n) &\leq \log M_{1n} M_{2n} + n\gamma \}, 
\end{align}
and 
\begin{align}
K      &\triangleq K_{11} + K_{12} + K_{21} + K_{22},  \label{eqn:defK}\\
K_{11} &\triangleq \sup_{x_2^n , y_1^n } \frac{P_{Y_1^n|X_2^n}(y_1^n|x_2^n)}{Q_{Y_1^n|X_2^n}(y_1^n|x_2^n)},\qquad 
K_{12} \triangleq \sup_{y_1^n } \frac{P_{Y_1^n}(y_1^n)}{Q_{Y_1^n}(y_1^n)}, \\
K_{21} &\triangleq \sup_{x_1^n , y_2^n } \frac{P_{Y_2^n|X_1^n}(y_2^n|x_1^n)}{Q_{Y_2^n|X_1^n}(y_2^n|x_1^n)},\qquad 
K_{22} \triangleq \sup_{y_2^n } \frac{P_{Y_2^n}(y_2^n)}{Q_{Y_2^n}(y_2^n)}.  \label{eqn:defK3}
\end{align}
\end{mylemma}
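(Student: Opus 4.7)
The plan is to prove Lemma \ref{LemmaF} by a random coding argument combined with threshold decoding at each receiver, in the Feinstein style generalized to the two-transmitter setting. Draw each of the $M_{jn}$ codewords of transmitter $j$ independently from $P_{X_j^n}$ for $j=1,2$, \emph{without} enforcing the power constraint at the codebook generation step. Since a drawn codeword falls outside $\mathcal{F}_{jn}$ with probability $P_{X_j^n}(\mathcal{F}_{jn}^c)$, replacing any offending transmitted codeword by a fixed feasible sequence costs at most $P_{X_1^n}(\mathcal{F}_{1n}^c)+P_{X_2^n}(\mathcal{F}_{2n}^c)$ in the ensemble-averaged error probability; this accounts for the last two terms of the claimed bound.

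At receiver $1$, use the following threshold decoder: declare $\hat{S}_1=s_1^\ast$ whenever $s_1^\ast$ is the unique index for which there exists some $s_2\in\{1,\ldots,M_{2n}\}$ with \emph{both} $\tilde{i}^n_{11}(X_1^n(s_1^\ast),X_2^n(s_2),Y_1^n)>\log M_{1n}+n\gamma$ and $\tilde{i}^n_{12}(X_1^n(s_1^\ast),X_2^n(s_2),Y_1^n)>\log M_{1n}M_{2n}+n\gamma$; otherwise declare an error. Receiver $2$ uses the analogous rule with $\tilde{i}^n_{21}$ and $\tilde{i}^n_{22}$. Conditional on the transmitted pair $(S_1,S_2)$, the event $\{\hat{S}_1\neq S_1\}$ is contained in the union of (i) the event that the true pair $(S_1,S_2)$ fails one of the two threshold tests at receiver $1$, and (ii) the event that some competing pair $(s_1',s_2')$ with $s_1'\neq S_1$ passes both tests. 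Summing (i) across both receivers yields $\Pr(\mathcal{E}_{11}\cup\mathcal{E}_{12}\cup\mathcal{E}_{21}\cup\mathcal{E}_{22})$. Crucially, competing pairs with $s_1'=S_1$ and $s_2'\neq S_2$ do \emph{not} produce a decoding error in $\hat{S}_1$ and can be ignored; this is why no threshold event involving $\log M_{2n}$ alone appears at receiver $1$.

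The spurious-pair contribution in (ii) is controlled by change of measure. Split according to whether $s_2'=S_2$ or $s_2'\neq S_2$. In the former case, the competing codeword $X_1^n(s_1')$ is independent of $(X_2^n(S_2),Y_1^n)$; replacing the conditional law $P_{Y_1^n|X_2^n}$ by the auxiliary $Q_{Y_1^n|X_2^n}$ introduces the Radon--Nikodym factor $K_{11}$, and applying the $\tilde{i}^n_{11}$ threshold then yields the per-pair bound $K_{11}\exp(-\log M_{1n}-n\gamma)$; summing over at most $M_{1n}$ such pairs gives $K_{11}e^{-n\gamma}$. In the latter case, $X_1^n(s_1')$ and $X_2^n(s_2')$ are both independent of $(X_1^n(S_1),X_2^n(S_2),Y_1^n)$, so we use the $\tilde{i}^n_{12}$ threshold with the marginal change of measure from $P_{Y_1^n}$ to $Q_{Y_1^n}$, paying $K_{12}$, to obtain $K_{12}\exp(-\log M_{1n}M_{2n}-n\gamma)$ per pair; summing over at most $M_{1n}M_{2n}$ pairs yields $K_{12}e^{-n\gamma}$. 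Symmetric arguments at receiver $2$ contribute $K_{21}e^{-n\gamma}$ and $K_{22}e^{-n\gamma}$, and adding the four terms produces exactly the $Ke^{-n\gamma}$ summand. Finally, derandomization selects a deterministic codebook meeting the averaged bound.

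The main technical subtlety is engineering the decoder so that precisely the four events $\mathcal{E}_{11},\mathcal{E}_{12},\mathcal{E}_{21},\mathcal{E}_{22}$ appear in the ``true pair fails'' term and no extraneous events are introduced. Outputting only $\hat{S}_1$ (rather than the joint pair $(\hat{S}_1,\hat{S}_2)$) at receiver $1$ is the crucial design choice: it eliminates a would-be ``correct $s_1$, wrong $s_2$'' error event at receiver $1$ that would otherwise force a $\log M_{2n}$-type threshold condition into the bound. Once this decoder structure is fixed, the remaining manipulations are the familiar Verd\'u--Han likelihood-ratio computations.
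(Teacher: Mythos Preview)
Your proposal is correct and follows essentially the same approach as the paper's proof: random codebook generation from $P_{X_1^n}\times P_{X_2^n}$, simultaneous non-unique threshold decoding at each receiver (so that a wrong $s_2'$ with the correct $s_1$ is not counted as an error at receiver~1), the split of competing pairs into $s_2'=S_2$ versus $s_2'\neq S_2$, and the change-of-measure step that produces the $K_{11},K_{12},K_{21},K_{22}$ factors. The only cosmetic difference is that the paper handles the cost constraint by adding $P_{X_1^n}P_{X_2^n}(\{X_1^n\notin\mathcal{F}_{1n}\}\cup\{X_2^n\notin\mathcal{F}_{2n}\})$ at the end and then union-bounding, rather than your codeword-replacement description, but this yields the identical term.
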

\begin{myrm}
In fact, this lemma holds not just for Gaussian ICs, but for  general  ICs.
\end{myrm}
\begin{myrm}
The presence of the Radon-Nikodym derivatives $K_{ij}$ in \eqref{eqn:defK}--\eqref{eqn:defK3} is the price to pay for the luxury of using the auxiliary     output distributions. This version of generalized Feinstein is different from the earlier versions (cf.~\cite[Thm.~1]{VH94}) in that the information densities in this lemma involve auxiliary     output distributions   that can be chosen. This technique was similarly employed in \cite{Hayashi2009,ML13, Hayashi03}. By choosing  the appropriate auxiliary     output distributions   and input distributions, we can show that the inner bound to $\mathcal{L}(\kappa_1,\kappa_2,\epsilon)$ coincides with the outer bound. 
 \end{myrm}

First, we present the achievability proof for case $1$. \\
\textbf{Case} $1$: When $\kappa_1 = I_{11}$ and $\kappa_2 < I_{21}$ \\
Fix any  pair $(L_1, L_2)$ satisfying  
\begin{align}
\Phi\bigg( \frac{L_1}{\sqrt{V_1}}\bigg)\le\epsilon. \label{eqn:fixed_L1L2}
\end{align}
Let the number of codewords in the $j^{\mathrm{th}}$ codebook be 
\begin{align}
M_{nj}  = \lfloor \exp\big( n\kappa_j + \sqrt{n} L_j +  n^{1/4} \beta \big)  \rfloor
\end{align}
for $j=1,2,$ and a fixed $\beta >0$. It is clear that 
\begin{align}
\liminf_{n \to \infty} \frac {1}{\sqrt{n}} (\log M_{jn} - n \kappa_j) \geq L_j.
\end{align}

Therefore, in order to show that $(L_1,L_2)$ is $(\kappa_1,\kappa_2,\epsilon)$-achievable,  it suffices to show the existence of a sequence of $(M_{1n}, M_{2n},n,\epsilon_n,P_1,P_2)$-codes such that $\limsup_{n \to \infty}\epsilon_n \leq \epsilon$. For this, we define an appropriate input distribution to be used in Lemma~\ref{LemmaF}, which is going to be applied in this subsection.
Inspired by \cite{ML13, TanTom13}, we define the input distributions to be uniform on the respective power shells, i.e.\
\begin{align}
P_{X_j^n }(x_j^n  )\triangleq \frac{\delta( \|x_j^n\|-\sqrt{nP_j})}{A_n(\sqrt{nP_j})} , \label{EDinput}
\end{align}
for $j =1,2$ and  where $\delta(\cdot)$ is the Dirac delta and $A_n(r) \triangleq \frac{2 \pi^{n/2}}{\Gamma(n/2)}r^{n-1}$ is the surface area of a sphere in $\mathbb{R}^n$ with radius $r$.  With this choice, we have $P_{X_1^n} ( \mathcal{F}_{1n}^c)+ P_{X_2^n} (\mathcal{F}_{2n}^c)=0$, i.e.\ the power constraints are satisfied with probability $1$. 

Define the     output distributions   
\begin{align}
\hat{Q}_{Y_1}(y_1) &\triangleq \mathcal{N}(y_1; 0, h_{11}^2 P_1 + h_{12}^2 P_2 +1 ) \label{EQd1} \\
\hat{Q}_{Y_2}(y_2) &\triangleq \mathcal{N}(y_2; 0, h_{12}^2 P_1 + h_{22}^2 P_2 +1 ) \\
\hat{Q}_{Y_1|X_2}(y_1|x_2) &\triangleq \mathcal{N}(y_1; h_{21} x_2, h_{11}^2 P_1 + 1) \\
\hat{Q}_{Y_2|X_1}(y_2|x_1) &\triangleq \mathcal{N}(y_2; h_{12} x_1, h_{22}^2 P_2 + 1) .  \label{EQd4} 
\end{align}
These are the     output distributions   of the Gaussian IC when the inputs are $X_1\sim \mathcal{N}(0,P_1)$ and $X_2 \sim \mathcal{N}(0,P_2)$. 

Choose the auxiliary     output distributions   $Q_{Y_1^n}(y_1^n) $, $Q_{Y_2^n}(y_2^n) $, $Q_{Y_1^n|X_2^n}(y_1^n|x_2^n) $ and $Q_{Y_2^n|X_1^n}(y_2^n|x_1^n) $ in Lemma~\ref{LemmaF} to be the $n$-fold memoryless extensions of $\hat{Q}_{Y_1}(y_1) $, $\hat{Q}_{Y_2}(y_2) $, $\hat{Q}_{Y_1|X_2}(y_1|x_2) $ and $\hat{Q}_{Y_2|X_1}(y_2|x_1) $ respectively, the distributions of which are given in (\ref{EQd1}-\ref{EQd4}). With this choice of auxiliary     output distributions, the value of $K$ in Lemma \ref{LemmaF} is shown in the following lemma to be bounded.

\begin{mylemma} \label{LemmaFiniteK}
For $n$ sufficiently large,  $K_{11}$, $K_{21}$, $K_{12}$ and $K_{22}$  are finite  . Thus, $K$ in (\ref{eqn:defK})  is also finite.
\end{mylemma}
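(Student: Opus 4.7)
The plan is to reduce each supremum defining $K_{ij}$ to a one-dimensional problem using the rotational invariance of both the numerator and denominator densities. Consider $K_{11}$ first. After the change of variable $\tilde y = y_1^n - h_{21} x_2^n$, the numerator $P_{Y_1^n|X_2^n}(y_1^n|x_2^n)$ becomes the density at $\tilde y$ of $h_{11} X_1^n + Z_1^n$, with $X_1^n$ uniform on the sphere of radius $\sqrt{n P_1}$ and $Z_1^n$ iid standard Gaussian, while the denominator becomes $\mathcal{N}(\tilde y; \mathbf{0}, (h_{11}^2 P_1 + 1) I_n)$. Both laws are invariant under orthogonal transformations of $\tilde y$, so the ratio depends on $\tilde y$ only through $r = \|\tilde y\|$, and the supremum over $(x_2^n, y_1^n)$ collapses to a one-dimensional supremum over $r \in [0,\infty)$. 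The analogous reduction works for $K_{21}$ by symmetry; for $K_{12}$ and $K_{22}$ one simply notes that $P_{Y_1^n}$ is the density of $h_{11}X_1^n + h_{21}X_2^n + Z_1^n$, which is still rotationally invariant as a convolution of rotationally invariant laws.

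Next I would analyse the scalar ratio $g(r) = P(r)/Q(r)$ directly. Classical computations (see, e.g., \cite{Shannon59,Polyanskiy2010}) give an explicit expression for $P$ in terms of the modified Bessel function $I_{(n-2)/2}$, while $Q(r) = (2\pi(h_{11}^2 P_1 + 1))^{-n/2} \exp(-r^2 / (2(h_{11}^2 P_1 + 1)))$ is elementary. One checks that $g$ is continuous on $[0,\infty)$, that as $r \to 0$ both $P$ and $Q$ tend to finite positive limits so that $g$ is bounded near zero, and that as $r \to \infty$ the standard asymptotics of $I_{(n-2)/2}$ give $P(r) \sim c_n r^{-(n-1)/2} \exp(-(r - h_{11}\sqrt{nP_1})^2/2)$, which decays like a Gaussian of variance $1$, whereas $Q$ decays like a Gaussian of the strictly larger variance $h_{11}^2 P_1 + 1$. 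Consequently $g(r) \to 0$ as $r \to \infty$, so $g$ attains its supremum at some finite $r^{\ast}$ and $K_{11} < \infty$. For $K_{12}$, replacing $h_{11}^2 P_1$ by the effective variance $h_{11}^2 P_1 + h_{21}^2 P_2$ and observing that the support of $h_{11}X_1^n + h_{21}X_2^n$ is contained in the ball of radius $\sqrt{n}(h_{11}\sqrt{P_1}+h_{21}\sqrt{P_2})$ leaves the tail comparison intact; $K_{21}$ and $K_{22}$ are handled symmetrically.

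The main obstacle is making the tail comparison quantitative enough for the downstream asymptotic analysis. Qualitative finiteness $K_{ij} < \infty$ for each fixed large $n$ follows just from continuity and the variance comparison above. If a polynomial-in-$n$ bound is required so that $K\,e^{-n\gamma}\to 0$ under the chosen $\gamma$, one can either invoke uniform asymptotic expansions of $I_{\nu}$ at large order and argument, or bypass special functions altogether via a Cauchy--Schwarz estimate $P(r) \le \bigl(\mathbb{E}[ W_1^n(y_1^n \mid X_1^n, x_2^n)^2]\bigr)^{1/2}$ followed by standard spherical concentration bounds. This is the route taken by MolavianJazi--Laneman~\cite{ML13} in the Gaussian MAC setting, and it transfers essentially verbatim to the present IC.
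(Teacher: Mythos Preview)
Your high-level plan matches the paper's: exploit rotational invariance to reduce each $K_{ij}$ to a one-variable problem in the radius, and express the numerator through the modified Bessel function $I_{n/2-1}$. The execution differs in two places worth noting.

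For $K_{11}$ (and by symmetry $K_{21}$), the paper does not appeal to compactness. It combines Prokhorov's explicit inequality $z^{-k}I_k(z)\le \sqrt{\pi/8}\,(k^2+z^2)^{-1/4}(k+\sqrt{k^2+z^2})^{-k}e^{\sqrt{k^2+z^2}}$ with Binet's formula for $\log\Gamma$ to obtain a bound of the shape $D_{11}\le\exp\!\big[c+c_n+\tfrac{n}{2}\,\phi_{\xi,P_1,n}(\|y_1^n-h_{21}x_2^n\|^2/n)\big]$, and then checks that $\phi_{\xi,P_1,n}$ converges to a function $\phi_{P_1}\le 0$. This yields a bound \emph{uniform in $n$}, which is what the application actually requires: with $\gamma=\tfrac{\log n}{2n}$ one has $Ke^{-n\gamma}=K/\sqrt{n}$, so mere finiteness at each fixed $n$ is not enough. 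You correctly flag this gap and your two proposed remedies (uniform Bessel asymptotics, or the Cauchy--Schwarz/concentration route of \cite{ML13}) are both workable; the paper's route is the first of these, made explicit.

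For $K_{12}$ (and $K_{22}$), the paper uses a reduction you did not mention. Rather than analyse $P_{Y_1^n}$ directly, it sets $B^n=h_{11}X_1^n+h_{21}X_2^n$, notes that $P_{B^n}\le K_{12}'\,Q_{B^n}$ with $Q_{B^n}=\mathcal{N}(\mathbf{0},(h_{11}^2P_1+h_{21}^2P_2)I_n)$ implies, after convolving both sides with the same Gaussian noise, $P_{Y_1^n}\le K_{12}'\,Q_{Y_1^n}$, hence $K_{12}\le K_{12}'$. The density $P_{B^n}$ (sum of two independent sphere-uniform vectors) has an explicit form supported on an annulus, and the paper bounds $P_{B^n}/Q_{B^n}$ by another $\exp[\tfrac{n}{2}\rho_n(\cdot)]$ expression with $\rho_n\to\rho\le 0$. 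Your direct tail comparison for $K_{12}$ is sound in principle---the signal part has bounded norm, so the numerator decays like variance-$1$ Gaussian while the denominator has variance $h_{11}^2P_1+h_{21}^2P_2+1$---but making this quantitative from $P_{Y_1^n}$ itself is messier than working with $P_{B^n}$. The paper's pre-noise reduction is the cleaner way to extract an $n$-uniform constant.
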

This lemma is proved  in Subsection \ref{SecFiniteK}.

Define
\begin{align}
\alpha_{11} &\triangleq 1+ h_{11}^2 P_1, \quad \alpha_{12} \triangleq 1+ h_{11}^2 P_1 + h_{21}^2 P_2, \\
\alpha_{21} &\triangleq 1+ h_{22}^2 P_2, \quad \alpha_{22} \triangleq 1+ h_{12}^2 P_1 + h_{22}^2 P_2.
\end{align}
We have
\begin{align}
\tilde{i}_{11}^n &= \log \frac{W_1^n(Y_1^n|X_1^n X_2^n)}{Q_{Y_1^n|X_2^n}(Y_1^n|X_2^n)} \\
                 &= \frac{n}{2} \log (1 + h_{11}^2 P_1) + \frac{ \sum_{k=1}^n (Y_{1k} - h_{21} X_{2k})^2} {2(1+h_{11}^2P_1)} - \frac{\sum_{k=1}^n(Y_{1k}- h_{11} X_{1k} - h_{21} X_{2k})^2 } {2}  \\
                 &= \frac{n}{2} \log (1 + h_{11}^2 P_1) + \frac{ \sum_{k=1}^n (Z_{1k} + h_{11} X_{1k})^2} {2(1+h_{11}^2P_1)} - \frac{\sum_{k=1}^n(Z_{1k})^2} {2}  \\
                 &=n I_{11} + \frac{1}{2 \alpha_{11}}[(\alpha_{11}-1)(n- \|Z_1^n\|^2) + 2 h_{11}\langle X_1^n,Z_1^n \rangle], 
\end{align}
where $\langle a^n, b^n \rangle $ denotes the inner product between $a^n$ and $b^n$.

Similarly, it can be shown that the  other three modified information densities can be expressed as
\begin{align}
\tilde{i}_{21}^n &= n I_{21} + \frac{1}{2 \alpha_{21}}[(\alpha_{21}-1)(n- \|Z_2^n\|^2) + 2 h_{22}\langle X_2^n,Z_2^n \rangle ] \nonumber \\
\tilde{i}_{12}^n &= n I_{12} + \frac{1}{2 \alpha_{12}}[(\alpha_{12}-1)(n- \|Z_1^n\|^2)  \nonumber \\
         &\,\,+ 2 h_{11} h_{21} \langle X_2^n,X_1^n \rangle + 2 h_{11}\langle X_1^n,Z_1^n \rangle  + 2 h_{21}\langle X_2^n,Z_1^n \rangle ]  \nonumber \\
\tilde{i}_{22}^n &= n I_{22} + \frac{1}{2 \alpha_{22}}[(\alpha_{22}-1)(n- \|Z_2^n\|^2)  \nonumber \\
         &\,\,+ 2 h_{22} h_{12} \langle X_2^n,X_1^n \rangle + 2 h_{22}\langle X_2^n,Z_2^n \rangle  + 2 h_{12}\langle X_1^n,Z_2^n \rangle ].  
\end{align}

Next, we use the  {\em central limit theorem for functions} technique proposed by  MolavianJazi-Laneman~\cite{ML13} to transform these modified information densities into functions of sums of independent random vectors. Let $T_j^n \sim \mathcal{N}(\mathbf{0}, \mathbf{I}_{n\times n})$, for $j=1,2,$ be standard Gaussian random vectors that are independent of each other   and of  the noises $Z_j^n$. Note that the input distribution in (\ref{EDinput}) results in $X_{jk} = \sqrt{nP_j}\frac{T_{jk}}{\|T_j^n \|}$, for $k \in \{1,\ldots,n\}$.  Indeed, $\|X_j^n\|^2 = nP_j$ with probability one. Now consider the  length-$10$  random vector $\mathbf{U}_k \triangleq ( \{U_{j1k} \}_{j=1}^4,\{U_{j2k} \}_{j=1}^4,  U_{9k}, U_{10k})$, where  
\begin{alignat}{2}
&U_{11k} \triangleq 1 - Z_{1k}^2,\quad   & &  U_{21k}  \triangleq h_{11}\sqrt{P_1}T_{1k}Z_{1k},\, \nonumber \\
&U_{31k} \triangleq h_{21}\sqrt{P_2}T_{2k}Z_{1k}, \quad  && U_{41k} \triangleq h_{11}h_{21} \sqrt{P_1 P_2}T_{1k} T_{2k}, \, \nonumber \\
&U_{12k} \triangleq 1 - Z_{2k}^2,\quad  &&U_{22k} \triangleq h_{22}\sqrt{P_2}T_{2k}Z_{2k}, \, \nonumber \\
&U_{32k} \triangleq h_{12}\sqrt{P_1} T_{1k}Z_{2k},\quad &&U_{42k} \triangleq h_{12}h_{22} \sqrt{P_1 P_2}T_{1k} T_{2k}, \, \nonumber \\
&U_{9k} \text{ } \triangleq T_{1k}^2 -1,\quad && U_{10k} \triangleq T_{2k}^2 -1.
\end{alignat}
It is easy to verify that  $\mathbf{U}_k$ is i.i.d.\ across all channel uses $k \in \{1,\ldots,n\}$, and $\mathbb{E}(\mathbf{U}_k) =0$ and $\mathbb{E} (\|\mathbf{U}_k\|^3)$ is finite. The covariance matrix of $\mathbf{U}_1$ is given by
\begin{align}
\mathrm{Cov}(\mathbf{U}_1 ) = \begin{bmatrix}
2        &0                &0                &0               &0        &0              &0            &0               &0  &0\\
0        &\alpha_{11} -1   &0                &0               &0        &0              &0            &0               &0  &0\\
0        &0                &\alpha_{33}      &0               &0        &0              &0            &0               &0  &0\\
0        &0                &0                &\alpha_{44}     &0        &0              &0            &\alpha_{48}     &0  &0\\
0        &0                &0                &0               &2        &0              &0            &0               &0  &0\\
0        &0                &0                &0               &0        &\alpha_{21}-1  &0            &0               &0  &0\\
0        &0                &0                &0               &0        &0              &\alpha_{77}  &0               &0  &0\\
0        &0                &0                &\alpha_{48}     &0        &0              &0            &\alpha_{88}     &0  &0\\
0        &0                &0                &0               &0        &0              &0            &0               &2  &0\\
0        &0                &0                &0               &0        &0              &0            &0               &0  &2\\
\end{bmatrix},
\end{align} 
where
\begin{align}
\alpha_{33} &\triangleq h_{21}^2 P_2 \\
\alpha_{44} &\triangleq h_{11}^2 h_{21}^2 P_1 P_2 \\
\alpha_{48} &\triangleq P_1P_2h_{11}h_{21}h_{12}h_{22} \\
\alpha_{77} &\triangleq h_{12}^2 P_1\\ 
\alpha_{88} &\triangleq h_{12}^2 h_{22}^2 P_1 P_2.
\end{align}
Note that $\alpha_{11} + \alpha_{33} = \alpha_{12}$ and $\alpha_{21} + \alpha_{77} = \alpha_{22}$.\\

Define the functions $\tau_{11},\tau_{12}:\mathbb{R}^{10}\to \mathbb{R}$ as follows
\begin{align}
\tau_{11}(\mathbf{u}) &\triangleq (\alpha_{11} -1) u_{11} + \frac{2 u_{21}}{\sqrt{1 + u_9}} \\
\tau_{12}(\mathbf{u}) &\triangleq (\alpha_{12} -1) u_{11} + \frac{2 u_{21}}{\sqrt{1 + u_9}} + \frac{2 u_{31}}{\sqrt{1 + u_{10}}}  
                                     + \frac{2 u_{41}}{\sqrt{1 + u_9} \sqrt{1 + u_{10}}},
\end{align}
for receiver $1$. Similarly, define $\tau_{21}(\mathbf{u})$ and $\tau_{22}(\mathbf{u})$ for receiver~$2$ as follows
\begin{align}
\tau_{21}(\mathbf{u}) &\triangleq (\alpha_{21} -1) u_{12} + \frac{2 u_{22}}{\sqrt{1 + u_{10}}} \\
\tau_{22}(\mathbf{u}) &\triangleq (\alpha_{22} -1) u_{12} + \frac{2 u_{22}}{\sqrt{1 + u_{10}}} + \frac{2 u_{32}}{\sqrt{1 + u_{9}}}  
                                     + \frac{2 u_{42}}{\sqrt{1 + u_9} \sqrt{1 + u_{10}}}.
\end{align}
Denote $\mathbf{\tau}(\mathbf{u}) \triangleq [\tau_{11}(\mathbf{u}),\tau_{21}(\mathbf{u}) ,\tau_{12}(\mathbf{u}),\tau_{22}(\mathbf{u})]^T$. 
It can be shown that, for $l \in \{11,12,21,22\}$,
\begin{align}
\tilde{i}_l^n = n  I_l + \frac{n}{2 \alpha_l} \tau_l \left(\frac{1}{n}  \sum_{k=1}^n \mathbf{U}_k\right).
\end{align}

Denote the diagonal matrix $\Lambda \triangleq \mathrm{diag} (\frac{1}{\alpha_{11}}, \frac{1}{\alpha_{21}}, \frac{1}{\alpha_{12}}, \frac{1}{\alpha_{22}})$. We have
\begin{align}
\frac{1}{\sqrt{n}}\tilde{\mathbf{i}}_{\mathrm{d}}^n - \sqrt{n}\,  \mathbf{I}_{\mathrm{d}} =  \frac{\sqrt{n}}{2} \Lambda \mathbf{\tau} \left(\frac{1}{n}  \sum_{k=1}^n \mathbf{U}_k\right).
\end{align}

Note that $\mathbf{\tau}(\mathbf{0}) = \mathbf{0}$ and the vector function $\mathbf{\tau}(\mathbf{u})$ has continuous second-order derivatives in all neighbourhood of $\mathbf{u} =\mathbf{0}$. Therefore, the vector function $\mathbf{\tau}(\mathbf{u})$ satisfies the conditions given in Lemma \ref{LemmaBE2}. The Jacobian matrix $J_{\tau}(\mathbf{u})$ of $\mathbf{\tau}(\mathbf{u})$ with respect to $\mathbf{u}$, calculated at $\mathbf{u} = \mathbf{0}$, is given by 
\begin{align}
J_{\tau}(\mathbf{0}) = \begin{bmatrix}
\alpha_{11} -1   &2   &0  &0     &0                 &0  &0 &0    &0 &0\\
0                &0   &0  &0     &\alpha_{21} -1    &2  &0 &0    &0 &0\\
\alpha_{12} -1   &2   &2  &2     &0                 &0  &0 &0    &0 &0\\
0                &0   &0  &0     &\alpha_{22} -1    &2  &2 &2    &0 &0\\
\end{bmatrix}.
\end{align}
Next, by Lemma \ref{LemmaBE2}, we have that  the random vector $\frac{1}{\sqrt{n}}\tilde{\mathbf{i}}_{\mathrm{d}}^n - \sqrt{n}\,  \mathbf{I}_{\mathrm{d}}$ converges in distribution to a zero-mean Gaussian with  covariance matrix $V_{\mathrm{d}}$, which is given by
\begin{align}
V_{\mathrm{d}} &= \frac{1}{n} \cdot \frac{n}{4}\cdot  \Lambda J_{\tau}(\mathbf{0}) \mathrm{Cov}(\mathbf{U}_1 ) [J_{\tau}(\mathbf{0})]^T \Lambda \\
               &= \begin{bmatrix}
V_1                &0                   &V_{{\mathrm{d}}13}  &0   \\
0                  &V_2                 &0                   &V_{{\mathrm{d}}24} \\
V_{{\mathrm{d}}13} &0                   &V_{{\mathrm{d}}33}  &V_{{\mathrm{d}}34} \\
0                  &V_{{\mathrm{d}}24}  &V_{{\mathrm{d}}34}  &V_{{\mathrm{d}}44}  \\
\end{bmatrix} 
\end{align}
where 
\begin{align}
V_{{\mathrm{d}}13} &\triangleq \mathsf{V}(h_{11}^2 P_1, h_{11}^2 P_1 + h_{21}^2 P_2) \\
V_{{\mathrm{d}}24} &\triangleq \mathsf{V}(h_{22}^2 P_2, h_{22}^2 P_2 + h_{12}^2 P_1)\\
V_{{\mathrm{d}}33} &\triangleq \mathsf{V}(h_{11}^2P_1 + h_{21}^2 P_2)  + \frac{h_{11}^2 P_1 h_{21}^2 P_2}{(h_{11}^2 P_1 + h_{21}^2 P_2+1)^2}  \\
V_{{\mathrm{d}}44} &\triangleq \mathsf{V}(h_{22}^2 P_2 + h_{12}^2 P_1) + \frac{h_{12}^2 P_1 h_{22}^2 P_2}{(h_{12}^2 P_1 + h_{22}^2 P_2+1)^2}\\
V_{{\mathrm{d}}34} &\triangleq \frac{h_{12} h_{11} P_1  h_{21}h_{22} P_2}{(h_{11}^2 P_1 + h_{21}^2 P_2+1)(h_{12}^2 P_1 + h_{22}^2 P_2+1)}.
\end{align}

Thus, $V_{\mathrm{d}}$ has the form
\begin{align}
V_{\mathrm{d}} = \begin{bmatrix}
V_1 &0 &* &* \\
0 &V_2&* &* \\
* &* &*  &* \\
* &* &*  &*
\end{bmatrix}. \label{eqn:Vd}
\end{align}
In the above,  the $*$'s represent  entries that are inconsequential  for the purposes of subsequent analyses. 

Define the length-$4$ rate vector $\mathbf{R}_{\mathrm{d}} \triangleq  [\frac{\log M_{1n}}{n}  ,  \frac{\log M_{2n}}{n}, \frac{\log (M_{1n}M_{2n})}{n} , \frac{\log (M_{1n}M_{2n} ) }{n} ]^T$.
Appealing~to Lemma~\ref{LemmaF}, with $\gamma = \frac{\log n}{2n}$,  we have 
\begin{align}
\epsilon_n &\leq 1    -  \Pr \left(\frac{1}{\sqrt{n}}\tilde{\mathbf{i}}_{\mathrm{d}}^n ( X_{1}^n X_{2}^n Y_{1}^n Y_{2}^n)   >  \sqrt{n}( \mathbf{R}_{\mathrm{d}}   +   \gamma \mathbf{1} )\right)   -   K e^{-n\gamma} \nonumber \\
           &\leq 1 -  \Pr \left(\frac{1}{\sqrt{n}}  \sum_{k=1}^n  (\tilde{\mathbf{i}}_{{\mathrm{d}}k}   - \mathbf{I}_{\mathrm{d}} ) 
  >  \sqrt{n}\left(\mathbf{R}_{\mathrm{d}}   - \mathbf{I}_{\mathrm{d}}  +  \gamma \mathbf{1}  \right) \right) - \frac{K}{\sqrt{n}}\nonumber  \\
&\stackrel{(a)}\leq 1 -  \Psi \left(-\sqrt{n} \left(\mathbf{R}_{\mathrm{d}} -\mathbf{I}_{\mathrm{d}} + \gamma \mathbf{1} \right); \mathbf{0}, V_{\mathrm{d}} \right)  - O\left(\frac{1}{\sqrt{n}}\right)\nonumber\\
&\stackrel{(b)}\leq 1 - \Psi(-\sqrt{n}(\mathbf{R}_{\mathrm{d}} -\mathbf{I}_{\mathrm{d}} ); \mathbf{0}, V_{\mathrm{d}}) + O \left( \frac{\log n} {\sqrt{n}} \right)  ,\label{Edirect1}
\end{align}
where $(a)$ follows from a variant of the multivariate Berry-Esseen theorem, which is stated in Lemma \ref{LemmaBE2}; 
and $(b)$ follows from Taylor expanding $\mathbf{t}\mapsto\Psi(\mathbf{t}; \mathbf{0}, V_{\mathrm{d}})$.

Due to the strictly very strong interference  assumption (Definition~\ref{Def2}), 
\begin{equation}
h_{22}^2 P_2 +1 < \frac{h_{21}^2P_2 +h_{11}^2 P_1 + 1}{h_{11}^2 P_1 + 1}.
\end{equation}
Thus, $I_{11} + I_{21} <I_{12}$. Similarly, we have $I_{11} + I_{21} <I_{22}$. Therefore, as $n\to\infty$, we have 
\begin{align}
-\sqrt{n}(\mathbf{R}_{\mathrm{d}} -\mathbf{I}_{\mathrm{d}} ) 
= 
-\sqrt{n} \begin{bmatrix}
\kappa_1 +\frac{L_1}{\sqrt{n}} + \frac{\beta}{n^{3/4}} - I_{11} \\
\kappa_2 + \frac{L_2}{\sqrt{n}} + \frac{\beta}{n^{3/4}} - I_{21} \\
\kappa_1 +\kappa_2 + \frac{L_1}{\sqrt{n}} + \frac{L_2}{\sqrt{n}} + 2 \frac{\beta}{n^{3/4}} - I_{12} \\
\kappa_1 + \kappa_2 + \frac{L_1}{\sqrt{n}} + \frac{L_2}{\sqrt{n}} + 2 \frac{\beta}{n^{3/4}} - I_{22} \\
\end{bmatrix}
 \to   \begin{bmatrix}
-L_1 \\
+\infty \\
+\infty \\
+\infty \\
\end{bmatrix} \label{eqn:vectors} . 
\end{align}
Thus, $\Psi(-\sqrt{n}(\mathbf{R}_{\mathrm{d}}  - \mathbf{I}_{\mathrm{d}} ); \mathbf{0}, V_{\mathrm{d}})  \to  \Psi(-L_1;0,V_1)  =  \Phi\big(-\frac{L_1}{\sqrt{V_1}}\big)$. Taking $\limsup$ on both sides of (\ref{Edirect1}), we have 
\begin{align}
\limsup_{n \to \infty} \epsilon_n &\leq 1 - \Phi\bigg(-\frac{L_1}{\sqrt{V_1}}\bigg)  = \Phi\bigg( \frac{L_1}{\sqrt{V_1}}\bigg)  
                            \le  \epsilon, 
\end{align}
where the final inequality follows the choice of $L_1$ in~\eqref{eqn:fixed_L1L2}.  This completes the proof of the direct part for  Case 1.\\

\textbf{Case} $2$: When $\kappa_1 = I_{11}$ and $\kappa_2 = I_{21}$ \\ 
In this case, we have $\Psi(-\sqrt{n}(\mathbf{R}_{\mathrm{d}} -\mathbf{I}_{\mathrm{d}} ); \mathbf{0}, V_{\mathrm{d}}) \to \Psi([-L_1 \,   -L_2]^T;0,V_{\mathrm{c}})$ because the second and third entries   in \eqref{eqn:vectors}   tend  to $+\infty$  (by the strictly very strong interference assumption) while the first and fourth entries tend  to $L_1$ and $L_2$ respectively. Thus, as mentioned previously, only the $(1,1), (1,2), (2,1)$ and $(2,2)$ entries in $V_{\mathrm{d}}$, defined in \eqref{eqn:Vd}, are required. Note that $V_{\mathrm{c}}$ is a sub-matrix of $V_{\mathrm{d}}$ (in the $[1:2, 1:2]$ position).   Furthermore, by the fact that $V_{\mathrm{c}}$ is diagonal, the relation in~\eqref{eqn:diag} also holds. The rest of the arguments are similar to  case $1$.\\

\textbf{Case} $3$: When $\kappa_1 < I_{11}$ and $\kappa_2 = I_{21}$ \\
By symmetry, case $3$ is proved similarly to case $1$.

\subsection{Supporting lemmas}
 
This subsection contains  a few supporting lemmas, which will be used to prove the main result of this paper.

The following lemma gives  a variant of the multivariate Berry-Esseen Theorem \cite{Gotze91} \cite{BH10}, which is a restatement of Corollary~38 in \cite{WKT13}. The lemma can be applied to random vectors which are independent, but not necessarily identically distributed. For i.i.d.\ random vectors, interested readers can refer to Bentkus's work \cite{Bentkus2003}. This lemma is used in the converse proof of Theorem~\ref{T1}.
\begin{mylemma} \label{LemmaBE}
Let $\mathbf{U}_1,\ldots, \mathbf{U}_n$ be independent, zero-mean random vectors in $\mathbb{R}^m$. Let $\mathbf{G}_n \triangleq \frac{1}{\sqrt{n}}(\mathbf{U}_1+ \ldots +\mathbf{U}_n)$, $V \triangleq \mathrm{cov}(\mathbf{G}_n)$, $t \triangleq \frac{1}{n} \sum_{i=1}^n \mathbb{E}[\|\mathbf{U}_i\|_2^3]$ and let $\mathbf{Z} \sim \mathcal{N}(\mathbf{0},V) $. Let $\mathbb{C}_m$ be the family of all convex, Borel measurable subsets of $\mathbb{R}^m$. Assume $V  \succ 0$ and let the minimum eigenvalue of $V $ be $\lambda_{\min}(V)$. Then, for all $n \in \mathbb{N}$, we have
\begin{align}
\sup_{\mathfrak{C} \in \mathbb{C}_m} |\Pr(\mathbf{G}_n \in \mathfrak{C}) - \Pr(\mathbf{Z} \in \mathfrak{C})| \leq \frac{254\, \sqrt{m}\, t}{\lambda_{\min}(V )^{3/2}\sqrt{n}}.
\end{align}
\end{mylemma}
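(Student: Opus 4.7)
The plan is to prove this by reducing to a form where a known multivariate Berry--Esseen bound for independent (not necessarily identically distributed) sums with identity covariance applies, and then to track the constants through a linear change of variables. Since the statement is noted to be a restatement of \cite[Cor.~38]{WKT13}, what follows is essentially an outline of the derivation carried out there, organized around three steps: whitening, invoking a known identity-covariance bound, and back-translating moments.

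The first step is \textbf{whitening}. Because $V \succ 0$, set $\tilde{\mathbf{U}}_i \triangleq V^{-1/2}\mathbf{U}_i$, so that $\tilde{\mathbf{G}}_n \triangleq V^{-1/2}\mathbf{G}_n = \frac{1}{\sqrt{n}}\sum_{i=1}^n \tilde{\mathbf{U}}_i$ has $\mathrm{cov}(\tilde{\mathbf{G}}_n) = \mathbf{I}_m$, and $\tilde{\mathbf{Z}} \triangleq V^{-1/2}\mathbf{Z} \sim \mathcal{N}(\mathbf{0},\mathbf{I}_m)$. Since $V^{-1/2}$ is an invertible linear map, it bijects $\mathbb{C}_m$ with itself, so
\begin{align}
\sup_{\mathfrak{C}\in\mathbb{C}_m}\bigl|\Pr(\mathbf{G}_n\in\mathfrak{C})-\Pr(\mathbf{Z}\in\mathfrak{C})\bigr| \;=\; \sup_{\tilde{\mathfrak{C}}\in\mathbb{C}_m}\bigl|\Pr(\tilde{\mathbf{G}}_n\in\tilde{\mathfrak{C}})-\Pr(\tilde{\mathbf{Z}}\in\tilde{\mathfrak{C}})\bigr|.
\end{align}

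The second step is to \textbf{invoke an identity-covariance multivariate Berry--Esseen bound} for independent, zero-mean summands (G\"otze \cite{Gotze91}, with the universal constant extracted in \cite{BH10,WKT13}):
\begin{align}
\sup_{\tilde{\mathfrak{C}}\in\mathbb{C}_m}\bigl|\Pr(\tilde{\mathbf{G}}_n\in\tilde{\mathfrak{C}})-\Pr(\tilde{\mathbf{Z}}\in\tilde{\mathfrak{C}})\bigr| \;\leq\; \frac{c\,\sqrt{m}\,\tilde{t}}{\sqrt{n}},
\end{align}
where $\tilde{t} \triangleq \frac{1}{n}\sum_{i=1}^n \mathbb{E}\bigl[\|\tilde{\mathbf{U}}_i\|^3\bigr]$ and $c\leq 254$ is universal. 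The \textbf{bookkeeping} is then short: the operator norm of $V^{-1/2}$ equals $\lambda_{\min}(V)^{-1/2}$, so $\|\tilde{\mathbf{U}}_i\| \leq \lambda_{\min}(V)^{-1/2}\|\mathbf{U}_i\|$ almost surely, giving $\tilde{t}\leq \lambda_{\min}(V)^{-3/2}\,t$. Substituting recovers the claimed bound.

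The main obstacle, and really the substance of the cited corollary, lies in step two: establishing the explicit universal constant $c\leq 254$ in the identity-covariance bound for independent (rather than i.i.d.) summands. The whitening reduction and the third-moment estimate are routine; it is the derivation of a uniform constant in the non-i.i.d.\ setting, which is finer than the $O(m^{1/4})$ i.i.d.\ result of Bentkus \cite{Bentkus2003}, that requires the more delicate Stein-type or Fourier-analytic analysis carried out in \cite{Gotze91,BH10,WKT13}. Beyond that, no further input is needed to obtain the form required by the converse proof of Theorem~\ref{T1}.
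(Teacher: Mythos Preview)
The paper does not prove this lemma at all; it simply states it as ``a restatement of Corollary~38 in \cite{WKT13}'' and cites \cite{Gotze91,BH10} for the underlying multivariate Berry--Esseen theorem. Your outline --- whitening via $V^{-1/2}$, invoking the identity-covariance bound, and bounding third moments through $\|V^{-1/2}\|_{\mathrm{op}}=\lambda_{\min}(V)^{-1/2}$ --- is the standard and correct derivation, so in that sense you have supplied more than the paper does.

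One small inaccuracy in your commentary: you describe the non-i.i.d.\ constant as ``finer than the $O(m^{1/4})$ i.i.d.\ result of Bentkus \cite{Bentkus2003}.'' In fact Bentkus's $m^{1/4}$ dimension dependence is \emph{better} than the $\sqrt{m}$ appearing here; the point of the form quoted in \cite{WKT13} is not sharper dimension dependence but rather that it covers independent, non-identically-distributed summands with an explicit universal constant. This does not affect the validity of your argument, only the surrounding remark.
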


The following lemma provides  a variant of the multivariate Berry-Esseen Theorem \cite{Gotze91} \cite{BH10}, which is a restatement of Proposition 1 in \cite{ML13}. The lemma can be applied to functions of sums of i.i.d. random vectors under certain conditions. This lemma is used in the direct proof of Theorem \ref{T1}.

\begin{mylemma} \label{LemmaBE2}
Let $\{ \mathbf{U}_t \triangleq (U_{1t}, U_{2t}, \ldots, U_{at})\}_{t=1}^\infty$ be a sequence of zero-mean i.i.d.\ random vectors in $\mathbb{R}^a$ with  $\mathbb{E}[\|\mathbf{U}_t\|_2^3]$ being finite. Consider a vector-valued  function $\mathbf{g}  :\mathbb{R}^a \to \mathbb{R}^b$. Denote $\mathbf{g}(\mathbf{u} ) \triangleq [g_1(\mathbf{u} ), g_2(\mathbf{u} ), \ldots, g_{b}(\mathbf{u} )]^T$. Assume that ${\mathbf{g}}(\mathbf{u} )$ has continuous second-order partial derivatives in a  neighbourhood of $\mathbf{u}  = \mathbf{0}$ of side length at least ${n^{-1/4}}$. Denote the corresponding Jacobian matrix $J$ at $\mathbf{u}  = \mathbf{0}$ of ${\mathbf{g}}(\mathbf{u} )$ as $J \in \mathbb{R}^{b \times a}$, whose components are defined as
\begin{align}
J_{ji} \triangleq \frac{\partial {g}_j(\mathbf{u})}{\partial  u_i}\Bigg|_{\mathbf{u} = \mathbf{0}}
\end{align}
for $j \in \{1,2,\ldots,b \}$, and $i \in \{ 1,2,\ldots,a \}$. Let the random vector $\mathbf{Z}$ have distribution $\mathcal{N}( {\mathbf{g}}(\mathbf{0}),  \frac{1}{n}\,  J \, \mathrm{Cov}( \mathbf{U}_1)\,    J^T)$. Then, for any convex Borel-measurable set $\mathcal{D}$ in $\mathbb{R}^b$, there exists a finite positive constant $c$ such that
\begin{align}
\Bigg|\Pr \Bigg[  {\mathbf{g}} \Bigg( \frac{1}{n} \sum_{t=1}^n \mathbf{U}_t \Bigg) \in \mathcal{D} \Bigg] - \Pr[\mathbf{Z} \in \mathcal{D}] \Bigg| \leq \frac{c}{\sqrt{n}}
\end{align}
\end{mylemma}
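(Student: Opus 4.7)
The plan is to carry out a delta-method style argument: locally linearise $\mathbf{g}$ about the origin via Taylor's theorem, and then invoke the multivariate Berry--Esseen bound of Lemma~\ref{LemmaBE} on the resulting affine function of $\frac{1}{\sqrt{n}}\sum_{t=1}^n \mathbf{U}_t$. Write $\bar{\mathbf{U}}_n \triangleq \frac{1}{n}\sum_{t=1}^n\mathbf{U}_t$ and $\mathbf{G}_n \triangleq \sqrt{n}\,\bar{\mathbf{U}}_n$. Since $\mathbb{E}[\mathbf{U}_1]=\mathbf{0}$ and $\mathbb{E}[\|\mathbf{U}_1\|^2]<\infty$, Markov's inequality applied to $\|\bar{\mathbf{U}}_n\|^2$ yields $\Pr(\|\bar{\mathbf{U}}_n\|>n^{-1/4}) \le \mathbb{E}[\|\mathbf{U}_1\|^2]/\sqrt{n}=O(n^{-1/2})$. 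On the complementary ``good'' event, $\bar{\mathbf{U}}_n$ lies in the neighbourhood where $\mathbf{g}$ has continuous second-order partial derivatives, so Taylor's theorem with integral remainder gives
\begin{equation*}
\mathbf{g}(\bar{\mathbf{U}}_n) = \mathbf{g}(\mathbf{0}) + J\bar{\mathbf{U}}_n + \mathbf{r}_n, \qquad \|\mathbf{r}_n\|\le M\|\bar{\mathbf{U}}_n\|^2 \le M n^{-1/2},
\end{equation*}
where $M$ is a uniform bound on the second-order partials over the neighbourhood.

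Next, I would introduce the affine surrogate $\mathbf{Z}_n \triangleq \mathbf{g}(\mathbf{0}) + J\bar{\mathbf{U}}_n = \mathbf{g}(\mathbf{0}) + \frac{1}{\sqrt{n}} J\mathbf{G}_n$ together with the preimage $\mathcal{D}' \triangleq \{\mathbf{v}\in\mathbb{R}^a : \mathbf{g}(\mathbf{0}) + \frac{1}{\sqrt{n}}J\mathbf{v} \in \mathcal{D}\}$, which inherits convexity from $\mathcal{D}$ because $\mathbf{v}\mapsto \mathbf{g}(\mathbf{0}) + \frac{1}{\sqrt{n}}J\mathbf{v}$ is affine. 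Since $\{\mathbf{Z}_n\in\mathcal{D}\} = \{\mathbf{G}_n\in\mathcal{D}'\}$, applying Lemma~\ref{LemmaBE} over the convex family $\mathbb{C}_a$ gives $|\Pr(\mathbf{Z}_n\in\mathcal{D}) - \Pr(\mathbf{Z}'\in\mathcal{D}')| = O(n^{-1/2})$, where $\mathbf{Z}'\sim \mathcal{N}(\mathbf{0},\mathrm{Cov}(\mathbf{U}_1))$. But $\mathbf{g}(\mathbf{0})+\frac{1}{\sqrt{n}}J\mathbf{Z}'$ has exactly the distribution $\mathcal{N}\bigl(\mathbf{g}(\mathbf{0}),\tfrac{1}{n}J\mathrm{Cov}(\mathbf{U}_1)J^T\bigr)$ of $\mathbf{Z}$, so $\Pr(\mathbf{Z}'\in\mathcal{D}') = \Pr(\mathbf{Z}\in\mathcal{D})$ and consequently $|\Pr(\mathbf{Z}_n\in\mathcal{D}) - \Pr(\mathbf{Z}\in\mathcal{D})| = O(n^{-1/2})$.

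The remaining step is to transfer this estimate from the linearisation $\mathbf{Z}_n$ to $\mathbf{g}(\bar{\mathbf{U}}_n) = \mathbf{Z}_n + \mathbf{r}_n$, which is the main obstacle. On the good event, $\|\mathbf{r}_n\|\le \delta \triangleq Mn^{-1/2}$ yields the sandwich $\{\mathbf{Z}_n\in \mathcal{D}_{-\delta}\} \subseteq \{\mathbf{g}(\bar{\mathbf{U}}_n)\in\mathcal{D}\} \subseteq \{\mathbf{Z}_n\in\mathcal{D}^{+\delta}\}$, where $\mathcal{D}^{+\delta}$ and $\mathcal{D}_{-\delta}$ denote the outer and inner parallel bodies of $\mathcal{D}$, both of which remain convex. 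Combining with the bad-event contribution $O(n^{-1/2})$ and applying the Berry--Esseen reduction of the previous paragraph to the convex sets $\mathcal{D}^{+\delta}$ and $\mathcal{D}_{-\delta}$ bounds the target quantity by $\Pr(\mathbf{Z}\in\mathcal{D}^{+\delta}\setminus \mathcal{D}_{-\delta}) + O(n^{-1/2})$. The last ingredient is a Gaussian anti-concentration estimate $\Pr(\mathbf{Z}\in\mathcal{D}^{+\delta}\setminus\mathcal{D}_{-\delta}) = O(\delta) = O(n^{-1/2})$, uniform over convex $\mathcal{D}$. When $J$ has full row rank this is immediate because $\mathbf{Z}$ has a bounded Lebesgue density on $\mathbb{R}^b$; in the rank-deficient case one restricts to the affine subspace $\mathbf{g}(\mathbf{0}) + \mathrm{range}(J)$, where the induced Gaussian is non-degenerate and the same argument applies. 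Putting these pieces together yields the claimed $c/\sqrt{n}$ bound, and the slightly unusual hypothesis that the $C^2$-neighbourhood has side at least $n^{-1/4}$ is exactly what permits the quadratic remainder $M\|\bar{\mathbf{U}}_n\|^2$ to be absorbed into the Berry--Esseen error.
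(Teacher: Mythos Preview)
The paper does not give its own proof of Lemma~\ref{LemmaBE2}; it is simply quoted as a restatement of Proposition~1 in MolavianJazi--Laneman~\cite{ML13}. Your overall strategy --- Taylor-expand $\mathbf{g}$ about $\mathbf{0}$, feed the linear part $\mathbf{g}(\mathbf{0})+J\bar{\mathbf{U}}_n$ through the multivariate Berry--Esseen bound of Lemma~\ref{LemmaBE}, and absorb the quadratic remainder via a Gaussian boundary estimate --- is exactly the standard route and is how the cited reference proceeds.

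There is, however, a genuine scaling error in your last step. You claim $\Pr(\mathbf{Z}\in\mathcal{D}^{+\delta}\setminus\mathcal{D}_{-\delta})=O(\delta)$ ``because $\mathbf{Z}$ has a bounded Lebesgue density,'' but $\mathbf{Z}\sim\mathcal{N}\!\bigl(\mathbf{g}(\mathbf{0}),\tfrac{1}{n}J\,\mathrm{Cov}(\mathbf{U}_1)\,J^T\bigr)$ has covariance of order $n^{-1}$, so its density is of order $n^{b/2}$, not bounded in $n$. The correct Gaussian--perimeter bound for convex bodies gives
\[
\Pr\bigl(\mathbf{Z}\in\mathcal{D}^{+\delta}\setminus\mathcal{D}_{-\delta}\bigr)\;\le\;\frac{c_b\,\delta}{\sqrt{\lambda_{\min}\!\bigl(\tfrac{1}{n}J\,\mathrm{Cov}(\mathbf{U}_1)\,J^T\bigr)}}\;=\;O(\sqrt{n}\,\delta),
\]
and with your worst-case $\delta=Mn^{-1/2}$ (from $\|\bar{\mathbf{U}}_n\|\le n^{-1/4}$) this is only $O(1)$, so the argument as written does not close. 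The fix is to work in the rescaled picture where the limiting Gaussian has the $n$-independent covariance $J\,\mathrm{Cov}(\mathbf{U}_1)\,J^T$: the rescaled remainder obeys $\|\sqrt{n}\,\mathbf{r}_n\|\le M\|\mathbf{G}_n\|^2/\sqrt{n}$, which is \emph{typically} $O(n^{-1/2})$ since $\|\mathbf{G}_n\|=O_P(1)$. One then truncates at $\|\mathbf{G}_n\|\le c\sqrt{\log n}$, uses Lemma~\ref{LemmaBE} on the ball together with Gaussian tails to make the bad event $O(n^{-1/2})$, and applies the boundary estimate with the correctly scaled width; this is the argument carried out in~\cite{ML13}.

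A smaller gap: your appeal to Lemma~\ref{LemmaBE} on $\mathbf{G}_n\in\mathbb{R}^a$ needs $\mathrm{Cov}(\mathbf{U}_1)\succ 0$, which Lemma~\ref{LemmaBE2} does not assume and which actually fails in the paper's own application (there $U_{41k}$ and $U_{42k}$ are both scalar multiples of $T_{1k}T_{2k}$, so $\mathrm{Cov}(\mathbf{U}_1)$ is singular). You should first pass to the range of $\mathrm{Cov}(\mathbf{U}_1)$ --- where the restricted covariance is nondegenerate and affine preimages of convex sets remain convex --- before invoking Lemma~\ref{LemmaBE}. Your remark about rank-deficient $J$ handles degeneracy on the output side but not this input-side degeneracy.
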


%

\subsection{Proof of Lemma \ref{LemmaCalConverse}}  \label{SecCalC}
We have, for $k \in \{1,2, \ldots,n\}$,
\begin{align}
\tilde{i}_{11k}(x_{1k} x_{2k} Y_{1k}) &= \frac{1}{2} \log (h_{11}^2 P_1 +1) + \frac{(Y_{1k} - h_{21} x_{2k})^2}{2(1+h_{11}^2 P_1)} 
                             - \frac{(Y_{1k} - h_{11} x_{1k} - h_{21} x_{2k})^2}{2}. 
\end{align}
In this case, $\tilde{i}_{11k}(x_{1k} x_{2k} Y_{1k})$ has the same statistics as
\begin{align}
g_{11}(Z_{1k}) = \frac{1}{2} \log (h_{11}^2 P_1 +1) + \frac{(Z_{1k} + h_{11} x_{1k})^2}{2(1+ h_{11}^2 P_1)}
                          - \frac{Z_{1k}^2}{2}.
\end{align} 
Using this expression, we have
\begin{align}
\mathbb{E}[\tilde{i}_{11k}(x_{1k} x_{2k} Y_{1k})] &= \frac{1}{2} \log (h_{11}^2 P_1 +1) + \frac{1 + h_{11}^2 x_{1k}^2}{2(1+ h_{11}^2 P_1)}
                          - \frac{1}{2}, \\
\mathrm{var}[\tilde{i}_{11k}(x_{1k} x_{2k} Y_{1k})] &=  \frac{h_{11}^4 P_1^2 + 2h_{11}^2x_{1k}^2}{2(1+h_{11}^2  P_1)^2}.                         
\end{align}
Therefore,
\begin{align}
\mathbb{E} \left[\frac{1}{n}\sum_{k=1}^n  \tilde{i}_{11k}(x_{1k} x_{2k} Y_{1k}) \right] &= \frac{1}{2} \log (h_{11}^2 P_1 +1) + \frac{n + h_{11}^2  \| x_{1}^n \| ^2}{2n(1+ h_{11}^2 P_1)} - \frac{1}{2} \\
                 &= I_{11}.         
\end{align}
Next, we have
\begin{align}
\mathrm{var}\left[\frac{1}{\sqrt{n}}\sum_{k=1}^n \tilde{i}_{11k}(x_{1k} x_{2k} Y_{1k})\right] &\stackrel{(a)}= \frac{1}{n} \sum_{k=1}^n \mathrm{var} \left[ \tilde{i}_{11k}(x_{1k} x_{2k} Y_{1k})\right] \\
                                      &=\frac{1}{n} \cdot   \frac{n h_{11}^4 P_1^2 + 2h_{11}^2  \| x_{1}^n \| ^2}{2(1+h_{11}^2  P_1)^2} \\
                                      &= V_1,
\end{align}
where $(a)$ follows from the mutual independence of $Z_{1k}$'s.

Similarly,  $\tilde{i}_{21k}(x_{1k} x_{2k} Y_{2k})$  for $k \in \{1,2,...,n\}$  has the same statistics as
\begin{align}
g_{21}(Z_{2k}) = \frac{1}{2} \log (h_{22}^2 P_2 +1) + \frac{(Z_{2k} + h_{22} x_{2k})^2}{2(1+h_{22}^2 P_2)}
                          - \frac{Z_{2k}^2}{2},
\end{align} 
and its statistics are given by
\begin{align}
\mathbb{E}[\tilde{i}_{21}(x_{1k} x_{2k} Y_{2k})] &= \frac{1}{2} \log (h_{22}^2 P_2 +1) + \frac{1 + h_{22}^2 x_{2k}^2}{2(1+h_{22}^2 P_2)}
                          - \frac{1}{2}, \\
\mathrm{var}[\tilde{i}_{21}(x_{1k} x_{2k} Y_{2k})] &=  \frac{h_{22}^4 P_2^2 + 2h_{22}^2 x_{2k}^2}{2(1+h_{22}^2 P_2)^2}.                         
\end{align}

Similarly, we can find the mean and the variance of the sum of these information densities, yielding
\begin{align}
\mathbb{E}   \left[\frac{1}{n}\sum_{k=1}^n \tilde{\mathbf{i}}_{{\mathrm{c}}k}         ( x_{1k} x_{2k} Y_{1k} Y_{2k})\right] &= \mathbf{I}_{\mathrm{c}}, \\
 \mathrm{cov}\left[\frac{1}{\sqrt{n}}\sum_{k=1}^n \tilde{\mathbf{i}}_{{\mathrm{c}}k} ( x_{1k} x_{2k}  Y_{1k} Y_{2k}) \right] &= V_{\mathrm{c}}. \label{eqn:cov}
\end{align}
Interestingly, because $Z_{1j}$ is independent of $Z_{2k}$, we have
\begin{align}
\mathrm{cov} \big[ \tilde{i}_{11j}(x_{1j} x_{2j} Y_{1j}), \tilde{i}_{21k}(x_{1k} x_{2k} Y_{2k})\big] = 0,
\end{align}
for all $j,k \in \{1,2, \ldots, n\}$ with $j \neq k$. This leads directly to the diagonal covariance matrix in \eqref{eqn:cov}.
The lemma is proved.

\subsection{Proof of Lemma \ref{LemmaFiniteK} } \label{SecFiniteK}
Similar to  \cite[Lem.~61]{Polyanskiy2010}  and  \cite[Prop.~3]{ML13}, we can prove that $K_{11}$ and $K_{21}$ are upper bounded by a constant  when $n$ is sufficiently large.

The marginal conditional output distribution $P_{Y_1^n|X_2^n}$ induced by feeding the input distributions, given in (\ref{EDinput}), into the Gaussian IC can be shown to be 
\begin{align}
P_{Y_1^n|X_2^n}(y_1^n|x_2^n) &=\frac{1}{2 \pi^{n/2}} \Gamma \left(\frac{n}{2}\right) e^{-n h_{11}^2 P_1/2} e^{- \| y_1^n - h_{21} x_2^n \| ^2/2} 
                              \frac{I_{n/2-1}( \| y_1^n - h_{21} x_2^n \| \sqrt{nP_1} h_{11})}{( \| y_1^n - h_{21} x_2^n \| \sqrt{nP_1} h_{11})^{n/2-1}}, 
\end{align}
where $I_{v}(\cdot)$ is the modified Bessel function of the first kind and $v$-th order. The marginal distribution $P_{Y_2^n|X_1^n}$ has a similar form to the above.

We have
\begin{align}
D_{11}(y_1^n|x_2^n) &\triangleq \frac{P_{Y_1^n|X_2^n}(y_1^n|x_2^n)}{Q_{Y_1^n|X_2^n}(y_1^n|x_2^n)} \nonumber \\
               &=\frac{1}{2} \Gamma \left(\frac{n}{2} \right) [2 e^{-h_{11}^2P_1} (1+ h_{11}^2P_1)]^{n/2} e^{-\frac{h_{11}^2P_1 \| y_1^n - h_{21}x_2^n \| ^2}{2(1+h_{11}^2P_1)}} \cdot \frac{I_{n/2-1}( \| y_1^n -h_{21}x_2^n \| \sqrt{nP_1}h_{11})}{( \| y_1^n -h_{21}x_2^n \| \sqrt{nP_1}h_{11})^{n/2-1}}.
\end{align}

Note that the gamma function $\Gamma(\cdot)$ can take  different forms. Using Binet's first formula for $\log  \Gamma(z)$ \cite[Chap. ~1]{Erdelyi53V1}, we have
\begin{align}
\log  \Gamma(z) = \left(z-\frac{1}{2} \right) \log  z -z + \frac{1}{2} \log  (2 \pi) + \int_{0}^{\infty} \left(\frac{1}{2} - \frac{1}{t} + \frac{1}{e^t -1} \right) \frac{e^{-tz}}{t} \mathrm{d}t.
\end{align}
Note that the fourth term converges to $0$ as $z \to \infty$. Thus, we can upper-bound  $\Gamma \left(\frac{n}{2} \right)$ by
\begin{align}
\Gamma \left(\frac{n}{2} \right) \leq \left(\frac{n}{2}-\frac{1}{2} \right) \log  \frac{n}{2} -\frac{n}{2} + \frac{1}{2} \log  (2 \pi) + c_n
\end{align}
where $\{c_n\}_{n=1}^{\infty}$ is a sequence of numbers that converges to $0$.

From Prokhorov's work \cite{Prokhorov68} and~\cite[Lem.~61]{Polyanskiy2010}, when $k$ is even we can upper-bound the modified Bessel function as
\begin{align}
z^{-k} I_k(z) \leq \sqrt{\frac{\pi}{8}} (k^2 + z^2)^{-1/4} (k + \sqrt{k^2 + z^2})^{-k} e^{\sqrt{k^2+ z^2}}.
\end{align}
Note that $I_{n/2 -1}(\cdot) < I_{n/2 -3/2}(\cdot)$. When $n$ is odd, an upper bound is obtained by replacing $I_{n/2 -1}(\cdot)$ by $I_{n/2 -3/2}(\cdot)$. Thus, it is sufficient to consider the upper bound on $D(y_1^n|x_2^n)$ when $n$ is even.

After some manipulations, we can show that
\begin{align}
D_{11}(y_1^n|x_2^n) \leq \exp\left[ c_{11} + c_n + \frac{n}{2}\phi_{\xi,P_1,n}\left( \frac{ \| y_1^n - h_{21}x_2^n \| ^2}{n} \right) \right],
\end{align}
where
\begin{align}
c_{11} &\triangleq \log  \frac{1}{2} + \log  \sqrt{\frac{{\pi}}{8}} + \frac{1}{2} \log  (2 \pi) \\
\phi_{\xi,P_1,n}(z) &\triangleq \log  \left(2(1+h_{11}^2P_1) e^{-(1+h_{11}^2 P_1)}\right) - \frac{h_{11}^2 P_1 z}{h_{11}^2 P_1 +1} + \sqrt{\xi^2 + 4 h_{11}^2 P_1 z} \nonumber\\*
 & \qquad \qquad - \xi \log  \left(\xi + \sqrt{\xi^2 + 4 h_{11}^2 P_1 z} \right)- \frac{1- \xi}{2} \log  \left(\sqrt{\xi^2 + 4 h_{11}^2 P_1 z} \right) \\
\xi  &\triangleq \frac{n/2 -1}{n/2}.       
\end{align}

Note that 
\begin{align}
\lim_{n \to \infty} \phi_{\xi,P_1,n}(z) = \phi_{P_1} (z),
\end{align}
where
\begin{align}
\phi_{P_1}(z) &\triangleq \log  \left(2(1+h_{11}^2P_1) e^{-(1+h_{11}^2 P_1)}\right) - \frac{h_{11}^2 P_1 z}{h_{11}^2 P_1 +1} + \sqrt{1 + 4 h_{11}^2 P_1 z} -  \log  \left(1 + \sqrt{1 + 4 h_{11}^2 P_1 z} \right). 
\end{align}
It can be shown that $\phi_{P_1}(z) \leq 0$. Equality occurs when $z = 1+h_{11}^2 P_1$. Therefore, we have $K_{11}$ is   upper bounded by a constant, when $n$ is sufficiently large.  Similarly, we can shown that $K_{21}$ is   upper bounded by a constant when $n$ is sufficiently large.

It is hard to derive a closed-form expression for the output distribution $P_{Y_1^n}$ induced by the input distributions in (\ref{EDinput}) and the IC. However, we can characterize the distribution of $B^n \triangleq h_{11} X_1^n + h_{21} X_2^n$ (see \cite[Equations (137-151)]{ML13}). We have
\begin{align}
P_{B^n}(b^n) = 
\begin{cases}
&0 \quad \text{   if $ \| b^n \|  \leq |h_{11} \sqrt{nP_1} - h_{21}\sqrt{nP_2}| $} \\
&0 \quad \text{   if $ \| b^n \|  \geq |h_{11} \sqrt{nP_1} + h_{21}\sqrt{nP_2}| $} \\
&\phi_\mathrm{B}(b^n) \quad \text{   otherwise},\\
\end{cases}
\end{align}
where
\begin{align}
\phi_\mathrm{B}(b^n) &\triangleq \frac{1}{h_{21}^n} \sqrt{\frac{P_2}{\pi P_1}}\frac{h_{21}}{h_{11}}\frac{\Gamma(\frac{n}{2})}{\Gamma(\frac{n-1}{2})} \frac{1}{S_n(\sqrt{nP_2})} \frac{1}{ \| b^n \| } \left(1 - \left(\frac{ \| b^n \| ^2 + n(h_{11}^2P_1 - h_{21}^2 P_2)}{2h_{11}\sqrt{nP_1}  \| b^n \| }\right)^2\right)^{(n-3)/2} \\
\cos \theta_0 &\triangleq \frac{ \| b^n \| ^2 + n(h_{11}^2P_1 - h_{21}^2 P_2)}{2 h_{11} \sqrt{n P_1}  \| b^n \| }. 
\end{align}

 Define the auxiliary input distribution $Q_{B^n}(b^n) \triangleq N(b^n;\mathbf{0},(h_{11}^2 P_1 + h_{21}^2 P_2)\mathbf{I}_{n\times n})$. If this distribution is used as an input for the channel $Y_1^n = B^n + Z_1^n$, the corresponding output distribution is $Q_{Y_1^n}$. If it can be proved that 
\begin{equation}
 K_{12}^{\prime} \triangleq\sup_{b^n} \frac{P_{B^n}(b^n)}{Q_{B^n}(b^n)}
 \end{equation} 
 is uniformly bounded when $n$ is sufficiently large, then, for any $y_1^n$, we have
\begin{align}
P_{Y_1^n} (y_1^n) &= \int_{\mathbb{R}^n} P_{B^n}(b^n) P_{Y_1^n|B^n}(y_1^n|b^n) \mathrm{d}b^n \nonumber \\
          &\leq \int_{\mathbb{R}^n} K_{12}^{\prime} Q_{B^n}(b^n) P_{Y_1^n|B^n}(y_1^n|b^n) \mathrm{d}b^n \nonumber \\
          &=  K_{12}^{\prime} Q_{Y_1^n} (y_1^n).
\end{align}
Therefore, $K_{12} \leq K_{12}^{\prime}$. That is, $K_{12}$ is uniformly bounded when    $n$ is sufficiently large. Now, we   prove the finiteness of $K_{12}^{\prime}$. Define 
\begin{align}
D_{12} (b^n) \triangleq \frac{P_{B^n}(b^n)}{Q_{B^n}(b^n)}.
\end{align}

Next, by simple algebraic manipulations, it can be shown that
\begin{align}
D_{12} (b^n) \leq \exp \left[ c_{12} + c_n+ \rho_{12n}\left( \frac{ \| b^n \| ^2}{n}  \right) \right] 
\end{align}
where
\begin{align}
c_{12} &\triangleq   \log  \left( \frac{P_2}{\sqrt{\pi P_1}} \frac{h_{21}}{h_{11}} \right)  + \frac{\log (2\pi)}{2} \\
\rho_{12n}(z) &\triangleq -\frac{\log  z}{n} + \log  \frac{h_{11}^2P_1 + h_{21}^2 P_2}{ e h_{21}^2 P_2} + \frac{z}{h_{11}^2P_1 + h_{21}^2 P_2} + \frac{n-3}{n} \log  \left(1 - \frac{( z+ h_{11}^2 P_1 - h_{21}^2 P_2)^2}{4 h_{11}^2 P_1 z} \right),
\end{align}
and where $\{c_n\}$ is a sequence converging to $0$, and $|h_{11} \sqrt{nP_1} - h_{21}\sqrt{nP_2}| < z < |h_{11} \sqrt{nP_1} + h_{21}\sqrt{nP_2}|$. 

Note that 
\begin{align}
\lim_{n \to \infty} \rho_{12n}(z) = \rho_{12}(z),
\end{align}
where
\begin{align}
\rho_{12}(z) &\triangleq  \log  \frac{h_{11}^2P_1 + h_{21}^2 P_2}{ e h_{21}^2 P_2} + \frac{z}{h_{11}^2P_1 + h_{21}^2 P_2} + \log  \left(1 - \frac{( z+ h_{11}^2 P_1 - h_{21}^2 P_2)^2}{4 h_{11}^2 P_1 z} \right).
\end{align}
It can be shown that $\rho_{12}(z) \leq 0$. Equality occurs at $z= h_{11}^2P_1 + h_{21}^2 P_2$.
 Thus, we can conclude that $K_{12}^{\prime}$ is   upper bounded by a constant when $n$ is sufficiently large.  Similarly, $K_{22}$ can be proved to be upper bounded by a constant for $n$ sufficiently large.

\subsection{Proof of Lemma \ref{LemmaVH}} \label{Sec:ProofVH}
Given the joint distribution in (\ref{Ed}), denote the marginal distributions and the conditional distributions of this distribution  as $P_{Y_1^n X_1^n X_2^n }(y_1^n x_1^n x_2^n )$, $P_{Y_2^n X_1^n X_2^n }(y_2^n x_1^n x_2^n)$,  $P_{X_1^n|X_2^n }(x_1^n|x_2^n)$, and $P_{X_2^n|X_1^n }(x_2^n|x_1^n)$, where
\begin{align}
P_{Y_1^n X_1^n X_2^n }(y_1^n x_1^n x_2^n ) &\triangleq \sum_{y_2^n} P_{Y_2^n Y_1^n X_1^n X_2^n }(y_2^n y_1^n x_1^n x_2^n ), \label{Emarginal1}\\
P_{X_1^n|X_2^n }(x_1^n|x_2^n)              &\triangleq \frac{\sum_{y_2^n y_1^n} P_{Y_2^n Y_1^n X_1^n X_2^n }(y_2^n y_1^n x_1^n x_2^n )}{P_{X_2^n }(x_2^n)} \label{Emarginal2},
\end{align}
and the remaining distributions are defined similarly.

Define the decoding regions
\begin{align}
D_{1s_1} &\triangleq \{y_1^n \in \mathcal{Y}_1^n | g_{1n}(y_1^n) = s_1 \} \\
D_{2s_2} &\triangleq \{y_2^n \in \mathcal{Y}_2^n | g_{2n}(y_2^n) = s_2 \}  \\
D_{1s_1}^\prime &\triangleq \{(y_1^n y_2^n) \in \mathcal{Y}_1^n \times \mathcal{Y}_2^n | y_1^n \in D_{1s_1} \} \\
D_{2s_2}^\prime &\triangleq \{(y_1^n y_2^n) \in \mathcal{Y}_1^n \times \mathcal{Y}_2^n | y_2^n \in D_{2s_2} \},
\end{align}
where $s_1 \in \{1,2,\ldots,M_{1n} \}$ and $s_2 \in \{1,2,\ldots,M_{2n} \}$.\\

The decoding functions $g_{jn}$ and the encoding functions $f_{jn}$, for $j=1,2,$  in this proof, are defined in the section for problem formulation.

Note that
\begin{align}
\frac{ W_{1}^n (y_{1}^n|x_{1}^n x_{2}^n)}{Q_{Y_1^n|X_2^n}(y_1^n|x_2^n)} 
&= \frac{ P_{Y_1^n X_1^n X_2^n} (y_{1}^n x_{1}^n x_{2}^n)}{Q_{Y_1^n X_2^n}(y_1^n x_2^n) P_{X_1^n|X_2^n}(x_1^n|x_2^n)} \\
&\stackrel{(a)}= \frac{ P_{Y_1^n X_1^n X_2^n} (y_{1}^n x_{1}^n x_{2}^n)}{Q_{Y_1^n X_2^n}(y_1^n x_2^n) P_{X_1^n}(x_1^n)} \\
&\stackrel{(b)}= M_{1n} \frac{ P_{Y_1^n X_1^n X_2^n} (y_{1}^n x_{1}^n x_{2}^n)}{Q_{Y_1^n X_2^n}(y_1^n x_2^n) },
\end{align}
where
\begin{enumerate}[(a)]
\item follows from the fact that $X_1^n$ and $X_2^n$ are independent; and

\item follows from the fact that $P_{X^n_1}(x_1^n) = \frac{1}{M_{1n}}$ for all $x_1^n$ in the first codebook.
\end{enumerate}  

Similarly, we have
\begin{align}
\frac{ W_{2}^n (y_{2}^n|x_{1}^n x_{2}^n)}{Q_{Y_2^n|X_1^n}(y_2^n|x_1^n)} 
= M_{2n} \frac{ P_{Y_2^n X_1^n X_2^n} (y_{2}^n x_{1}^n x_{2}^n)}{Q_{Y_2^n X_1^n}(y_2^n x_1^n) }.
\end{align}
Define
\begin{align}
B_{1s_1s_2} &\triangleq \Big\{ y_1^n \in \mathcal{Y}_1^n \Big| \frac{ P_{Y_1^n X_1^n X_2^n} (y_{1}^n f_{1n}(s_1) f_{2n}(s_2))}{Q_{Y_1^n X_2^n}(y_1^n f_{2n}(s_2))} \leq e^{-n\gamma} \Big\} \\
B_{1s_1s_2}^\prime &\triangleq  \{(y_1^n y_2^n) \in \mathcal{Y}_1^n \times \mathcal{Y}_2^n| y_1^n \in B_{1s_1s_2} \} \\
B_{2s_1s_2} &\triangleq \Big\{ y_2^n \in \mathcal{Y}_2^n \Big| \frac{ P_{Y_2^n X_1^n X_2^n} (y_{2}^n f_{1n}(s_1) f_{2n}(s_2))}{Q_{Y_2^n X_1^n}(y_2^n f_{1n}(s_1))} \leq e^{-n\gamma} \Big\} \\
B_{2s_1s_2}^\prime &\triangleq  \{(y_1^n y_2^n) \in \mathcal{Y}_1^n \times \mathcal{Y}_2^n| y_2^n \in B_{2s_1s_2} \},
\end{align}
where $s_1 \in \{1,2,\ldots,M_{1n} \}$ and $s_2 \in \{1,2,\ldots,M_{2n} \}$.\\

Define 
\begin{align}
G_{1} &\triangleq \Big\{(x_1^n x_2^n y_1^n y_2^n) \in \mathcal{X}_1^n \times \mathcal{X}_2^n \times \mathcal{Y}_1^n \times \mathcal{Y}_2^n  \Big| \frac{ P_{Y_1^n X_1^n X_2^n} (y_{1}^n x_{1}^n x_{2}^n)}{Q_{Y_1^n X_2^n}(y_1^n x_2^n)} \leq e^{-n\gamma} \Big\} \\
G_{2} &\triangleq \Big\{(x_1^n x_2^n y_2^n y_2^n) \in \mathcal{X}_1^n \times \mathcal{X}_2^n \times \mathcal{Y}_1^n \times \mathcal{Y}_2^n \Big| \frac{ P_{Y_2^n X_1^n X_2^n} (y_{2}^n x_{1}^n x_{2}^n)}{Q_{Y_2^n X_1^n}(y_2^n x_1^n)} \leq e^{-n\gamma} \Big\},
\end{align}
where $s_1 \in \{1,2,\ldots,M_{1n} \}$ and $s_2 \in \{1,2,\ldots,M_{2n} \}$.\\

In order to prove this lemma, it suffices to prove
\begin{align}
P_{X_1^nX_2^n Y_1^n Y_2^n}(G_1 \cup G_2) \leq \epsilon_n + 2 e^{-n\gamma}.
\end{align}
We are going to prove the validity of this inequality. We have
\begin{align}
P_{X_1^nX_2^n Y_1^n Y_2^n}(G_1 \cup G_2) 
&= \sum_{s_1=1}^{M_{1n}} \sum_{s_2=1}^{M_{2n}} P_{X_1^nX_2^n Y_1^n Y_2^n} (f_{1n}(s_1) f_{2n}(s_2), B_{1s_1s_2}^\prime \cup B_{2s_1s_2}^\prime) \\
&= \sum_{s_1=1}^{M_{1n}} \sum_{s_2=1}^{M_{2n}} [P_{X_1^nX_2^n Y_1^n Y_2^n} (f_{1n}(s_1) f_{2n}(s_2),
 (B_{1s_1s_2}^\prime \cup B_{2s_1s_2}^\prime) \cap (D_{1s_1} \times D_{2s_2})^c) \nonumber \\
&\quad + P_{X_1^nX_2^n Y_1^n Y_2^n} (f_{1n}(s_1) f_{2n}(s_2),(B_{1s_1s_2}^\prime \cup B_{2s_1s_2}^\prime) \cap (D_{1s_1} \times D_{2s_2})) ] \\
&\leq \sum_{s_1=1}^{M_{1n}} \sum_{s_2=1}^{M_{2n}} [P_{X_1^nX_2^n Y_1^n Y_2^n} (f_{1n}(s_1) f_{2n}(s_2),  (D_{1s_1} \times D_{2s_2})^c) \nonumber \\*
&\quad + P_{X_1^nX_2^n Y_1^n Y_2^n} (f_{1n}(s_1) f_{2n}(s_2) ,(B_{1s_1s_2}^\prime \cup B_{2s_1s_2}^\prime) \cap (D_{1s_1} \times D_{2s_2})) ] \\
&\leq \epsilon_n + \sum_{s_1=1}^{M_{1n}} \sum_{s_2=1}^{M_{2n}} [ P_{X_1^nX_2^n Y_1^n Y_2^n} (f_{1n}(s_1) f_{2n}(s_2) ,B_{1s_1s_2}^\prime  \cap (D_{1s_1} \times D_{2s_2})) \nonumber \\*
&\quad + P_{X_1^nX_2^n Y_1^n Y_2^n} (f_{1n}(s_1) f_{2n} (s_2), B_{2s_1s_2}^\prime \cap (D_{1s_1} \times D_{2s_2})) ].
 \end{align}
Next, we upper-bound the second and third terms. We have
\begin{align}
&\sum_{s_1=1}^{M_{1n}} \sum_{s_2=1}^{M_{2n}}  P_{X_1^nX_2^n Y_1^n Y_2^n} (f_{1n}(s_1)f_{2n}(s_2), B_{1s_1s_2}^\prime  \cap (D_{1s_1} \times D_{2s_2})) \\
&\leq \sum_{s_1=1}^{M_{1n}} \sum_{s_2=1}^{M_{2n}}  P_{X_1^nX_2^n Y_1^n Y_2^n} (f_{1n}(s_1) f_{2n}(s_2), B_{1s_1s_2}^\prime  \cap D_{1s_1}^\prime ) \\
&=\sum_{s_1=1}^{M_{1n}} \sum_{s_2=1}^{M_{2n}} \sum_{(y_1^n y_2^n) \in B_{1s_1s_2}^\prime  \cap D_{1s_1}^\prime }   P_{X_1^nX_2^n Y_1^n Y_2^n} (f_{1n}(s_1) f_{2n}(s_2)y_1^n y_2^n) \\
&=\sum_{s_1=1}^{M_{1n}} \sum_{s_2=1}^{M_{2n}} \sum_{y_1^n  \in B_{1s_1s_2}  \cap D_{1s_1}}   P_{X_1^nX_2^n Y_1^n } (f_{1n}(s_1) f_{2n}(s_2) y_1^n) \\
&\stackrel{(a)}\leq \sum_{s_1=1}^{M_{1n}} \sum_{s_2=1}^{M_{2n}} \sum_{y_1^n  \in B_{1s_1s_2}  \cap D_{1s_1}}   Q_{X_2^n Y_1^n } ( f_{2n}(s_2) y_1^n) e^{-n\gamma} \\
&\leq \sum_{s_1=1}^{M_{1n}} \sum_{s_2=1}^{M_{2n}} \sum_{y_1^n  \in D_{1s_1}}   Q_{X_2^n Y_1^n } ( f_{2n}(s_2) y_1^n) e^{-n\gamma} \\ 
&= \sum_{s_2=1}^{M_{2n}}   Q_{X_2^n } ( f_{2n}(s_2) ) e^{-n\gamma} \\*
&\leq  e^{-n\gamma}, 
\end{align}
where (a) follows from the definition of $B_{1s_1s_2}$.\\

Similarly to the above, we can show that
\begin{align}
\sum_{s_1=1}^{M_{1n}} \sum_{s_2=1}^{M_{2n}} P_{X_1^nX_2^n Y_1^n Y_2^n} (f_{1n}(s_1) f_{2n}(s_2),  B_{2s_1s_2}^\prime \cap (D_{1s_1} \times D_{2s_2}))
\leq e^{-n\gamma}.
\end{align}
Thus, we have proved the lemma.

\subsection{Proof of Lemma \ref{LemmaF}} \label{Sec:ProofF}
First, we consider the case without cost constraints. Define the sets
\begin{align}
T_{j1} &\triangleq \left\{(x_1^n x_2^n y_j^n) \in \mathcal{X}^n_1 \times \mathcal{X}_2^n \times \mathcal{Y}_j^n | \tilde{i}^n_{j1} > \log M_{jn} + n\gamma  \right\} \\
T_{j2} &\triangleq \left\{(x_1^n x_2^n y_j^n) \in \mathcal{X}^n_1 \times \mathcal{X}_2^n \times \mathcal{Y}_j^n | \tilde{i}^n_{j2} > \log M_{1n} M_{2n} + n\gamma  \right\} \\
T_j    &= T_{j1} \cap T_{j2},
\end{align}
where the modified information densities $\tilde{i}^n_{j1}$ and $\tilde{i}^n_{j2}$ are defined in (\ref{eqn:i11_tilde}) and (\ref{eqn:i12_tilde}).

a) \textit{Codebook generation}\\
Fix a joint distribution $P_{X_1^n}(x_1^n) P_{X_2^n}(x_2^n)$. Generate $M_{jn}$ codewords $f_{jn}(s_j)$, for $ s_{j} \in \{ 1,2,...,M_{jn}\}$, and $j=1,2$. We denote the random codewords $f_{jn}(s_j)$ as $X_j^n(s_j)$ in the proof of this lemma.
\\

b) \textit{Encoding rules at transmitters}:\\
To transmit message $s_j$, transmitter $j$ sends the codewords $X_j^n(s_j)$.
\\

c) \textit{Decoding rules at receivers} \\
Upon receiving an output $y_1^n$, receiver $1$ finds the unique message $\hat{s_1}$ such that 
\begin{align}
(x_1^n(\hat{s_1}) x_2^n(\hat{s}_2) y_1^n) \in T_1^n \label{Ec}
\end{align}
for some $\hat{s_2}$. An error is declared otherwise. This decoding rule is also known as {\em simultaneous non-unique decoding rule} \cite[Section 6.2]{elgamal}.
The decoding rule at receiver $2$ is  defined similarly to the above. \\

d) \textit{Calculation of probability of error}\\
For ease of presentation, we define the event, for $j=1,2$,
\begin{align}
E_{js_1 s_2} \triangleq \{ ( (X_1^n(s_1) X_{2}^n(s_2)Y_j^n) \in T_j^n \}.
\end{align}

Decoding errors at receiver $1$ is bounded as
\begin{align}
          &\frac{1}{M_{1n}M_{2n}} \sum_{s_1=1}^{M_{1n}} \sum_{s_2=1}^{M_{2n}} 
\Bigg[ \Pr(E_{1s_1s_2}^c) +\Pr\Bigg(\bigcup_{s_1^\prime \neq s_1, \text{ any } s_2\prime} E_{1s_1^\prime s_2^\prime}  \Bigg)  \Bigg] \\
           &\stackrel{(a)} =   \Pr(E_{111}^c)  + 
              \Pr\left(\bigcup_{s_1^\prime \neq 1, \text{ any } s_2\prime} E_{1s_1^\prime s_2^\prime}  \right)   \\ 
           &\stackrel{(b)} \leq  \Pr(E_{111}^c) + \sum_{s_1^\prime \neq 1} \Pr (E_{1s_1^\prime 1}) 
+ \sum_{s_1^\prime \neq 1, s_2^\prime \neq 1} \Pr (E_{1s_1^\prime s_2^\prime}),
\end{align}
where 
\begin{enumerate}[(a)]
\item follows from the symmetry of the codebooks, and

\item follows from the union rule.
\end{enumerate}

Next, we bound the second term in the equation right above. \\
\begin{align}
\sum_{s_1^\prime \neq 1} \Pr (E_{1 s_1^\prime 1}) 
&= (M_{1n} -1) \Pr (\{(X_1^n(s_1^\prime) X_2^n(1) Y_1^n) \in T_{1} \}) \\
&\stackrel{(a)}= (M_{1n} -1) \sum_{(x_1^n x_2^n y_1^n) \in T_{1}} P_{X_1^n}(x_1^n) P_{X_2^n Y_1^n}(x_2^n y_1^n) \\
&\leq (M_{1n} -1) \sum_{(x_1^n x_2^n  y_1^n) \in T_{11}} P_{X_1^n}(x_1^n) P_{X_2^n Y_1^n}(x_2^n y_1^n) \\
&\leq (M_{1n} -1) \sum_{(x_1^n x_2^n  y_1^n) \in T_{11}} K_{11} P_{X_1^n}(x_1^n) Q_{X_2^n Y_1^n}(x_2^n y_1^n) \\
&\stackrel{(b)}\leq (M_{1n} -1) \sum_{(x_1^n  x_2^n y_1^n) \in T_{11}}  K_{11} P_{X_1^n}(x_1^n) 
                P_{X_2^n}(x_2^n) W_1( y_1^n|x_2^n x_1^n) e^{-n \gamma} \frac{1}{M_{1n}}  \\
&\leq   K_{11}  e^{-n \gamma}  
\end{align}
where
\begin{enumerate}[(a)]
\item follows from the fact that $X_{1}^n(s_1^\prime)$ and $(X_{2}^n(1) Y_1^n)$ are independent, when message pair $(1,1)$ are transmitted by transmitters, and
\item follows from the definition of the set $T_{11}$.
\end{enumerate}
Similarly, we can show that 
\begin{align}
\sum_{s_1^\prime \neq 1} \Pr (E_{1 s_1^\prime s_2^\prime}) 
\leq   K_{12}  e^{-n \gamma}. 
\end{align}

Similarly, we can upper-bound the decoding error events at receiver $2$ by 
\begin{align}
      &\frac{1}{M_{1n} M_{2n}} \sum_{s_1=1}^{M_{1n}} \sum_{s_2=1}^{M_{2n}} \Bigg[ \Pr(E_{2s_1s_2}^c)  + \Pr \Bigg(\bigcup_{s_2^\prime \neq s_2, \text{ any } s_1\prime} E_{2s_1^\prime s_2^\prime}  \Bigg)  \Bigg] \\
&\leq\Pr(E_{211}^c) + (K_{21}+ K_{22}) e^{-n\gamma}.
\end{align}

Therefore, we have
\begin{align}
\epsilon_n &\leq \Pr(E_{111}^c \cup E_{211}^c)   +  (K_{11}+ K_{12}) e^{-n\gamma}  + (K_{21}+ K_{22}) e^{-n\gamma}  \\
         &= \Pr (\mathcal{E}_{11} \cup \mathcal{E}_{12} \cup \mathcal{E}_{21} \cup \mathcal{E}_{22}) + K e ^{-n\gamma}. 
\end{align}

In the case where the cost constraint is imposed, we have
\begin{align}
\epsilon_n 
          &\leq \Pr (\mathcal{E}_{11} \cup \mathcal{E}_{12} \cup \mathcal{E}_{21} \cup \mathcal{E}_{22}) + K e ^{-n\gamma} 
         + P_{X_1^n} P_{X_2^n}(\{X_1^n \not \in \mathcal{F}_{1n} \cup X_2^n \not \in \mathcal{F}_{2n}\}).
\end{align}

Thus, we have proved the lemma.

\subsubsection*{Acknowledgments}
The authors would like to acknowledge several helpful discussions with Jonathan Scarlett and Masahito Hayashi. 

\bibliographystyle{unsrt}
\bibliography{myrefOPT}
\end{document}